\documentclass{article}

\PassOptionsToPackage{sort&compress}{natbib}

\usepackage[preprint]{neurips_2026}

\makeatletter
\renewcommand{\@notice}{}
\makeatother

\usepackage[utf8]{inputenc} 
\usepackage[T1]{fontenc}    
\usepackage{hyperref}       
\usepackage{url}            
\usepackage{booktabs}       
\usepackage{siunitx}
\usepackage{amsmath}
\usepackage{amssymb}
\usepackage{mathtools}
\usepackage{amsthm}

\usepackage[capitalize,noabbrev]{cleveref}
\AddToHook{cmd/appendix/before}{%
    \crefalias{section}{appendix}%
    \crefalias{subsection}{appendix}
}
\usepackage{nicefrac}       
\usepackage[final]{microtype}      
\usepackage[dvipsnames, x11names]{xcolor}         
\usepackage[createShortEnv]{proof-at-the-end}
\usepackage{placeins}
\title{Epsilon-Nash Equilibria in History-Dependent SA-MDPs}

\author{%
  Brandon Gary Kaplowitz\textsuperscript{1,*}\\
\And
  Dominik Bohnet Zurcher\textsuperscript{2,*} \AND
  Akash Agrawal\textsuperscript{2,3} \And
  Tala Jafari\textsuperscript{2} \And
  Christian Schroeder de Witt\textsuperscript{1} \And
  Paul W. Goldberg\textsuperscript{2} \\[0.5em]
  \textsuperscript{1}Department of Engineering Science, University of Oxford \\
  \textsuperscript{2}Department of Computer Science, University of Oxford \\
  \textsuperscript{3}ML Alignment and Theory Scholars \\
  \textsuperscript{*}Equal contribution. Correspondence to \texttt{brandon.kaplowitz@eng.ox.ac.uk}
  .
}
\usepackage{algorithm}
\usepackage{algorithmic}
\usepackage{tikz}
\usepackage{subcaption}
\usetikzlibrary{automata, positioning, arrows.meta,fit,backgrounds,calc}
\usetikzlibrary{positioning, shapes.geometric}
\usepackage{changepage}
\newcommand{\Prob}{\mathbb{P}}
\theoremstyle{plain}
\newtheorem{theorem}{Theorem}[section]

\newtheorem{lemma}[theorem]{Lemma}

\theoremstyle{definition}
\newtheorem{definition}[theorem]{Definition}

\theoremstyle{remark}

\usepackage[textsize=tiny]{todonotes}

\begin{document}

\maketitle

\begin{abstract}
We study state-adversarial Markov decision processes (SA-MDPs) as games of observation-space attacks: at each step, an agent selects an action from a received observation while an adversary---who knows the true state the agent is in---chooses a perturbed observation within a state-dependent proximity set. While existing work focuses on Markovian policies, we develop a solution concept and computational approach for SA-MDPs under history dependence. History dependence can materially change equilibrium outcomes and can force both the agent and the adversary to adapt their strategies. First, we prove the non-existence of universal (agnostic of the initial state distribution) history-dependent equilibrium policies. Our main result presents the first algorithmic route to computing $\epsilon$-approximations of initial-state dependent equilibria. We do so by reducing SA-MDPs to a strategically equivalent constrained zero-sum one-sided partially observable stochastic game. We test our algorithm on small analytically verifiable games and show that it scales to larger, more realistic benchmarks, including Atari Freeway rollouts with a 12-period ahead horizon.





\end{abstract}
\section{Introduction} 
Reinforcement learning under adversarially manipulated observations poses fundamental challenges that are not captured by standard Markovian robustness models. In this work, we study state-adversarial Markov decision processes with history-dependent policies, characterize their equilibrium structure, and provide the first algorithmic approach for computing approximate history-dependent equilibria.

We consider a state-adversarial Markov decision process (SA-MDP) modeling an observation-space attack game. In each time step $t$, an \emph{adversary} (player 2) first selects an observation $o^t$ to send to the agent, and an \emph{agent} (player 1) then takes an action $a^t_1$ based on its observations. The adversary knows the true state $s^t$ and must choose $o^t$ from an allowed set $B(s^t)$ (a \emph{proximity constraint} neighborhood of $s^t$), which limits how far the observation can deviate from reality. The agent receives $o^t$ (but not $s^t$) and the environment then transitions to a new state $s^{t+1}$ with probability $p(s^{t+1}\mid s^t,a^t_1)$, and a reward $R(s^t,a^t_1)$ is obtained by the agent (the adversary’s payoff is the negation). This repeats indefinitely with discount factor $\gamma\in(0,1)$. The adversary’s goal is to minimize the agent’s total expected reward by cleverly manipulating observations. Such SA-MDP models have been studied in robust RL and planning settings, usually under the restriction that both players use \emph{Markovian} (memoryless) policies.

Most prior work on SA-MDPs assumes stationary optimal policies depending only on the current state (for the adversary) or observation (for the agent). Under this assumption, the interaction can be formulated as a two-player zero-sum game with a stationary equilibrium policy pair $(\pi_1^*,\pi_2^*)$ (for agent and adversary respectively). However, as we argue below, an agent that can condition its strategy on past observations (i.e., a \emph{history-dependent} policy) can potentially achieve higher rewards than any memoryless policy by detecting and exploiting patterns in the observation sequence. We therefore seek a solution concept allowing \textbf{history-dependent policies} in SA-MDPs, and develop the first algorithm to compute such an equilibrium.

\subsection{Related Work}
This paper is inspired by the foundational work of cooperative multi-agent learning, aiming to build adversarially-robust, collaborative autonomous agents \citep{panait2005cooperative, wang2017cooperative, cai2022cooperative, wang2022cooperative}. Early work studied adversarial attacks on reinforcement-learning policies mainly as empirical attack constructions and vulnerability demonstrations: adversarial perturbations to observations can substantially degrade learned policies, and attack timing can matter over trajectories \citep{huang2017adversarial, lin2017tactics}; policy-induction attacks during learning were also considered early on \citep{behzadan2017vulnerability}. The SA-MDP line then formalized worst-case attacks on state observations and robust learning in this setting \citep{zhang2020robust}. Follow-up work studied stronger attacks and robust training in closely related fixed-agent, alternating-training, or restricted-policy-class settings, including learned optimal adversaries with alternating training \citep{zhang2021robust}, efficient strongest-attack methods \citep{sun2021strongest}, and robust/adversarial-training methods such as those of \citet{oikarinen2021robust} and \citet{liang2022efficient}.

Several nearby game-theoretic extensions are also relevant \citep{pendharkar2012game}. Illusory attacks impose information-theoretic detectability constraints on observation attacks \citep{franzmeyer2023illusory}. Broader online attack/defense formulations consider manipulation of states, observations, actions, and rewards \citep{mcmahan2023optimal, sarkadi2019modelling, motwani2024secret, wang2022novel, tu2021adversarial, BirmpasGHCRV21}. Temporally coupled perturbations have been modeled as a partially observable two-player zero-sum game and solved approximately via equilibrium computation \citep{liang2023game, khakpour2022partially}. In multi-agent settings, state-adversarial Markov games can fail to admit standard optimal-policy or robust-Nash solution concepts \citep{han2022solution}.

Our algorithmic approach also builds directly on the planning and game-solving literature: partially observable Markov decision processes (POMDPs) provide the underlying partial-observability formalism on the agent side \citep{kaelbling1998planning}, Heuristic search value iteration (HSVI) supplies the core search/value-iteration template \citep{smith2012heuristic}, and \citet{horak2023solving} gives the immediate zero-sum one-sided POSG framework and solver that our reduction and algorithm build on. Relative to this literature, our contribution is to study observation-space SA-MDPs with history-dependent policies, prove that universal history-dependent state-robust policies need not exist, and give a reduction-based route to computing $\epsilon$-approximate equilibria for a given initial state distribution.

\subsection{Background and Notation}

For any set $X$, let $\Delta(X)$ denote the set of probability distributions over $X$. Notation of variables is that $x^t_i$ indicates the ith player, ith entry, or ith index of variable $x$ at time or game-stage $t$. Any absent notation is to be understand as the vector of the variable for all $i$ or $t$ in the relevant set of potential outcomes. This split notation is to prevent confusion between a time subscript of $2t$ and ``$2,t$", which both appear in the paper.

\citet{zhang2020robust} introduce the state-adversarial Markov decision process (SA-MDP) to model an RL agent (player 1) interacting with an environment under adversarial observation perturbations (player 2). An SA-MDP can be defined as a tuple $M = (S, A_1, R, p, \gamma, B)$, where $S$ is the true state space, $A_1$ the agent's action space, $R: S \times A_1 \to \mathbb{R}$ the reward function, $p: S \times A_1 \to \Delta(S)$ the state transition probabilities, $\gamma\in(0,1)$ the discount factor, and $B$ the function that specifies the allowable perturbations of the adversary. In particular, for each true state $s\in S$, $B(s)\subseteq S$ is the set of states that the \emph{adversary} can present to our \emph{agent} as a perturbed observation. We denote the adversary’s (possibly stochastic) policy by $\pi_2: S \to \Delta(S)$, where $\pi_2(s)\in \Delta(B(s))$. At each time step, the agent observes $\tilde{s} \sim \pi_2(s)$, and then selects an action $a\in A_1$ according to its policy $\pi_1: S \to \Delta(A_1)$ (which is applied to the observation $\tilde{s}$). The environment then transitions to the next true state $s'$ drawn from $p(s'|s, a)$, and a reward $R(s,a)$ is emitted. Notably, the adversary influences the agent's perception but \emph{does not alter} the underlying state transition; $s'$ depends on the actual $s$ and $a$ taken, not directly on $\tilde{s}$. The adversary’s objective is assumed to be worst-case in that it aims to minimize the agent’s total return (e.g., by choosing perturbations that lead the agent to make suboptimal decisions). This worst-case observation model cannot be captured by a standard partially-observable MDP, since a POMDP uses fixed observation probabilities \citep{kaelbling1998planning}, whereas here the observation distribution observed by the agent depends on the adversary's policy.

Formally, we can define the value of a given stationary agent policy $\pi_1$ under a specific adversary $\pi_2$ starting from state $s$ as
$$ V(s;\pi_1,\pi_2) = \mathbb{E}_{\pi_1,\pi_2}\Big[ \sum_{t=0}^\infty \gamma^t R(s^t, a^t_1) \,\Big|\, s^{0} = s \Big] ,$$
where $a^t_1 \sim \pi_1(\cdot \mid \tilde{s}^t)$, $\tilde{s}^t \sim \pi_2(s^t)$, and $s^{t+1}\sim p(\cdot \mid s^t, a^t_1)$. Of particular interest is the optimal adversary for a given $\pi_1$, $\pi^*_{2,\pi_1}$. We define the optimal adversarial value function as:
$$V(s;\pi_1,\pi^*_{2,\pi_1}) =  \min_{\pi_2} V(s;\pi_1,\pi_2).$$
In an SA-MDP, the agent faces a two-player zero-sum game against the adversary, who chooses the worst observable state at each step. \citet{zhang2020robust} show that for a fixed $\pi_1$, the value function under the optimal adversary $\pi^*_{2,\pi_1}$ satisfies a robust Bellman equation. Specifically, letting $\underline V(s;\pi_1) = V(s;\pi_1,\pi^*_{2,\pi_1})$ denote the value under the strongest adversary, we have:
\begin{equation}
\label{eq:robust-bellman}
\begin{aligned}
\underline V(s;\pi_1)
= &\min_{\tilde{s}\in B(s)}
\sum_{a\in A_1} \pi_1(a \mid \tilde{s}) \ \cdot \sum_{s'\in S} p(s' \mid s, a)
\Big[ R(s,a) + \gamma \underline V(s';\pi_1) \Big].
\end{aligned}
\end{equation}
which defines a contraction operator on value functions. Intuitively, at state $s$ the adversary chooses a perturbation $\tilde{s}\in B(s)$ that minimizes the expected return and expected distribution over future values, knowing the agent will act according to $\pi_1$ on $\tilde{s}$. This worst-case Bellman operator allows ``policy evaluation'' under the optimal adversary.

However, in order to consider history dependent policies, we introduce another game, the Zero-Sum One-Sided Partially Observable Stochastic Game, formalized by  \cite{horak2023solving}. 

\begin{adjustwidth}{0em}{0pt}
\begin{definition}[Zero-Sum One-Sided Partially Observable Stochastic Game]
    A Zero-Sum One-Sided Partially Observable Stochastic Game (or zero-sum OS-POSG) is  a tuple $G = (S,A_1, A_2, O, T, R, \gamma)$ where: 
    \begin{itemize}
        \item S is a finite set of game states,
        \item $A_1$ and $A_2$ are finite sets of actions of player 1 and player 2, respectively,
        \item O is a finite set of observations 
        \item $T(\cdot| s, a_1, a_2) \in \Delta(O \times S)$ represents probabilistic transition function for every $(s, a_1, a_2) \in S \times A_1 \times A_2$,
        \item $R : S \times A_1 \times A_2 \rightarrow \mathbb{R}$ is player 1's reward function,
        \item $\gamma \in (0, 1)$ is a discount factor.
    \end{itemize} 
\end{definition}
\end{adjustwidth}

\noindent The play unfolds as follows.  First, an initial state 
$s^{0} \sim b^0$ is drawn from the probability distribution \(b^0\). Then, for each stage, the current state \(s^{t}\) is revealed to player 2 but remains hidden from player 1. Simultaneously, player 1 chooses \(a^{t}_{1} \in A_1\) and player 2 chooses \(a^{t}_{2} \in A_2\). An unobserved reward \(R\bigl(s^{t},a^{t}_{1},a^{t}_{2}\bigr)\) is awarded to player 1, while player 2's payoff is \(-R\bigl(s^{t},a^{t}_{1},a^{t}_{2}\bigr)\). The system transitions to a new state \(s^{t+1}\) and emits an observation \(o^{t}\) according to $T\bigl(o^{t},s^{t+1} \mid s^{t},a^{t}_{1},a^{t}_{2}\bigr).$ Player 2 then observes the full outcome of the stage: $\bigl(s^{t},\,a^{t}_{1},\,a^{t}_{2},\,o^{t}\bigr).$ Player 1, by contrast, sees only $\bigl(a^{t}_{1},\,o^{t}\bigr),$ without learning \(a^{t}_{2}\), \(s^{t}\), or \(s^{t+1}\).

\section{Theoretical results}

In Section~\ref{sec:nonexist}, we show the non-existence of universal Nash equilibria, i.e. policies independent of the initial state distribution. Assuming an initial state distribution is given, in Section~\ref{sec:history}, we show the importance of history-dependent policies, arguing that agents are incentivized to use them, and adversaries must then modify their policies accordingly. We then show in Section~\ref{sec:modified} how to reduce the SA-MDP problem to a modified version of the zero-sum OS-POSG, which allows us to find an epsilon-approximate Nash equilibrium for the original SA-MDP. This equilibrium allows us to find policies that guarantee near optimal rewards.

\subsection{Non-Existence of Universal History Dependent Nash Equilibrium Policies}\label{sec:nonexist}

In a zero-sum game, the agent's optimal strategy can be defined in two equivalent ways: as a robust policy that maximizes value against the worst-case adversary, or as part of a Nash equilibrium (NE) of the game. 
We show that these notions coincide for SA-MDPs. 
\begin{lemmaE}[Robust Optimality $\iff$ Nash Equilibrium][restate,text link section]
\label{lemma:robust-nash}
Consider a zero-sum SA-MDP with finite state, action, and observation sets, nonempty perturbation sets $B(s)$, bounded rewards, a fixed initial distribution $b\in\Delta(S)$, and unrestricted full-history behavioral strategy spaces $\Pi_1$ and $\Pi_2$ for the agent and the adversary, respectively. We assume perfect recall in which each player's private history records all previously observed information and all of that player's previous actions.
For a strategy profile $(\pi_1,\pi_2)\in\Pi_1\times\Pi_2$, let $V(b;\pi_1,\pi_2)$ denote the agent's expected return for the horizon under consideration when $s^{0}\sim b$; when the horizon is $N<\infty$, we also write $V^N(b;\pi_1,\pi_2)$ to make the horizon explicit.
For a finite horizon, we allow $\gamma\in(0,1]$, with $\gamma=1$ corresponding to the undiscounted return; for an infinite discounted horizon, we require $\gamma\in(0,1)$. In either case,
the following equality holds for every initial distribution $b\in\Delta(S)$:
\[
\begin{aligned}
&\arg\max_{\pi_1\in\Pi_1}\min_{\pi_2\in\Pi_2}V(b;\pi_1,\pi_2)\\
&\quad =
\left\{\pi_1\in\Pi_1:\ \exists\pi_2\in\Pi_2\text{ with }(\pi_1,\pi_2)
\text{ a Nash equilibrium when }s^{0}\sim b\right\}.
\end{aligned}
\]
Both sets are nonempty.
\end{lemmaE}
\begin{proofE}
Throughout, player \(1\) is the maximizing player, and player \(2\) is the minimizing player.

\medskip\noindent\emph{Finite horizon.}
Fix a horizon \(N<\infty\). Since all underlying alphabets are finite, the induced \(N\)-stage extensive-form zero-sum game has finitely many histories and information sets. By assumption, it has perfect recall.

Let \(P_i^N\) be the finite set of player \(i\)'s pure complete contingent plans, where a plan specifies an action at every information set of that player, including information sets that the plan itself prevents from being reached. Let
\[
M_i^N\coloneqq\Delta(P_i^N)
\]
be the corresponding mixed-strategy simplex. If \(u_N(\mu,\nu)\) denotes the expected payoff under \((\mu,\nu)\in M_1^N\times M_2^N\), then \(u_N\) is bilinear. Von Neumann's minimax theorem, therefore, gives
\((\mu_N^*,\nu_N^*)\in M_1^N\times M_2^N\) such that
\[
u_N(\mu,\nu_N^*)
\le
u_N(\mu_N^*,\nu_N^*)
\le
u_N(\mu_N^*,\nu)
\qquad
\forall(\mu,\nu)\in M_1^N\times M_2^N.
\]
Therefore, by Kuhn's realization-equivalence theorem for finite perfect-recall games, there are behavioral strategies \(\pi_{1,N}^*\) and \(\pi_{2,N}^*\) that are realization-equivalent to \(\mu_N^*\) and \(\nu_N^*\), respectively \citep{kuhn1953extensive}.
Here, realization equivalence means equality of the induced distribution over terminal histories against every strategy of the opponent. 
Conversely, every behavioral strategy in this finite game has a realization-equivalent mixed strategy over complete contingent plans.

Consequently, for arbitrary behavioral deviations \(\pi_1\) and \(\pi_2\), choose realization-equivalent mixed strategies \(\mu_{\pi_1}\) and \(\nu_{\pi_2}\).
The preceding normal-form saddle inequalities imply
\[
V^N(b;\pi_1,\pi_{2,N}^*)
=
u_N(\mu_{\pi_1},\nu_N^*)
\le
u_N(\mu_N^*,\nu_N^*)
=
V^N(b;\pi_{1,N}^*,\pi_{2,N}^*)
\]
and
\[
V^N(b;\pi_{1,N}^*,\pi_{2,N}^*)
=
u_N(\mu_N^*,\nu_N^*)
\le
u_N(\mu_N^*,\nu_{\pi_2})
=
V^N(b;\pi_{1,N}^*,\pi_2).
\]
Thus \((\pi_{1,N}^*,\pi_{2,N}^*)\) is a behavioral saddle point of the \(N\)-stage game.

\medskip\noindent\emph{Infinite discounted horizon.}
We obtain an infinite-horizon saddle point as a limit of finite-horizon saddle points.

For \(n\geq 0\), let \(\mathcal H_i^n\) denote the finite set of player \(i\)'s private histories immediately before the stage-\(n\) action. Under the information structure above, generic elements of \(\mathcal H_1^n\) and \(\mathcal H_2^n\), respectively, have the forms
\[
h^{n}_{1}
=
\bigl(o^{0},a^{0}_{1},\ldots,o^{n-1},a^{n-1}_{1},o^{n}\bigr)
\]
and
\[
h^{n}_{2}
=
\bigl(s^{0},a^{0}_{2},o^{0},a^{0}_{1},s^{1},\ldots,a^{n-1}_{2},o^{n-1},a^{n-1}_{1},s^{n}\bigr),
\]
with \(h^{0}_{1}=(o^{0})\) and \(h^{0}_{2}=(s^{0})\), where \(a^t_2=o^t\in B(s^t)\) is the adversary's stage-\(t\) action. Let
\[
\mathcal H_i\coloneqq\bigcup_{n\geq 0}\mathcal H_i^n.
\]
Because all underlying alphabets are finite, each \(\mathcal H_i^n\) is finite, so \(\mathcal H_i\) is countable.
For \(h\in\mathcal H_i\), let \(A_i(h)\) be the nonempty finite set of actions available to player \(i\) at \(h\). Thus, \(A_1(h)=A_1\) for the agent, while \(A_2(h)=B(s)\) at an adversary history whose current true state is \(s\). Define
\[
\Pi_i
\coloneqq
\prod_{h\in\mathcal H_i}\Delta(A_i(h))
\]
and endow it with the product topology.

Each factor \(\Delta(A_i(h))\) is a compact metrizable simplex in a finite-dimensional Euclidean space. Tychonoff's theorem therefore implies that \(\Pi_i\) is compact in the product topology \citep[Theorem~37.3]{munkres2000topology}. Moreover, the probability coordinates \((h,a)\), where \(h\in\mathcal H_i\) and \(a\in A_i(h)\), form a countable set. Identifying each behavioral choice with its action probabilities therefore realizes \(\Pi_i\), with the product topology above, as a subspace of a countable product of copies of \([0,1]\), which is metrizable in the product topology \citep[Theorem~20.5]{munkres2000topology}. Thus, each \(\Pi_i\), and hence the finite product \(\Pi_1\times\Pi_2\), is compact and metrizable. It is therefore sequentially compact \citep[Theorem~28.2]{munkres2000topology}.

For \(N\ge1\), define the truncated discounted payoff
\[
V^N(b;\pi_1,\pi_2)
\coloneqq
\mathbb E_{s^{0}\sim b,\pi_1,\pi_2}
\left[
\sum_{t=0}^{N-1}\gamma^tR_t
\right].
\]
For each player \(i\), \(V^N(b;\cdot,\cdot)\) depends only on the behavioral coordinates indexed by the finite set \(\bigcup_{n=0}^{N-1}\mathcal H_i^n\), namely the private histories immediately preceding actions in the first \(N\) stages.
Expanding the expectation as a sum over the finitely many length-\(N\) paths shows that it is a polynomial in those coordinates and therefore is continuous in the product topology.

Choose \(\bar R<\infty\) such that \(|R_t|\le\bar R\) almost surely. Then, uniformly over \((\pi_1,\pi_2)\in\Pi_1\times\Pi_2\),
\begin{equation}
\label{eq:discounted-tail-uniform}
\left|
V(b;\pi_1,\pi_2)-V^N(b;\pi_1,\pi_2)
\right|
\le
\sum_{t=N}^{\infty}\gamma^t\bar R
=
\frac{\bar R\gamma^N}{1-\gamma}
\eqqcolon\varepsilon_N
\longrightarrow0.
\end{equation}
Thus \(V(b;\cdot,\cdot)\) is the uniform limit of the continuous functions \(V^N(b;\cdot,\cdot)\), so
\(V(b;\cdot,\cdot)\) is jointly continuous on \(\Pi_1\times\Pi_2\).

For every \(N\), let \((\pi_1^N,\pi_2^N)\) be a behavioral saddle point of the \(N\)-stage discounted game, whose existence follows from the finite-horizon argument, and extend these strategies arbitrarily at histories occurring after stage \(N\).
Sequential compactness of \(\Pi_1\times\Pi_2\) gives a subsequence \(N_k\to\infty\) and a profile
\((\pi_1^*,\pi_2^*)\) such that
\[
(\pi_1^{N_k},\pi_2^{N_k})
\longrightarrow
(\pi_1^*,\pi_2^*).
\]

Fix arbitrary infinite-horizon strategies
\((\pi_1,\pi_2)\in\Pi_1\times\Pi_2\).
Their restrictions to the first \(N_k\) stages are admissible behavioral deviations in the \(N_k\) 
stage game. 
Therefore
\[
V^{N_k}(b;\pi_1,\pi_2^{N_k})
\le
V^{N_k}(b;\pi_1^{N_k},\pi_2^{N_k})
\le
V^{N_k}(b;\pi_1^{N_k},\pi_2).
\]
Using \eqref{eq:discounted-tail-uniform} gives
\[
V(b;\pi_1,\pi_2^{N_k})
\le
V(b;\pi_1^{N_k},\pi_2^{N_k})+2\varepsilon_{N_k}
\]
and
\[
V(b;\pi_1^{N_k},\pi_2^{N_k})
\le
V(b;\pi_1^{N_k},\pi_2)+2\varepsilon_{N_k}.
\]
Joint continuity of \(V(b;\cdot,\cdot)\), convergence of the strategy profiles, and \(\varepsilon_{N_k}\to0\) now imply
\[
V(b;\pi_1,\pi_2^*)
\le
V(b;\pi_1^*,\pi_2^*)
\le
V(b;\pi_1^*,\pi_2)
\qquad
\forall(\pi_1,\pi_2)\in\Pi_1\times\Pi_2.
\]
Hence, \((\pi_1^*,\pi_2^*)\) is an infinite-horizon behavioral saddle point.

\medskip\noindent\emph{Equality of the two policy sets.}
In either horizon model, let \((\pi_1^0,\pi_2^0)\) be a saddle point and set
\[
v\coloneqq V(b;\pi_1^0,\pi_2^0).
\]
The saddle inequalities give
\[
v
=
\max_{\pi_1\in\Pi_1}\min_{\pi_2\in\Pi_2}V(b;\pi_1,\pi_2)
=
\min_{\pi_2\in\Pi_2}\max_{\pi_1\in\Pi_1}V(b;\pi_1,\pi_2).
\]

Suppose that \(\widehat\pi_1\) is robust-optimal. Then
\[
\min_{\pi_2}V(b;\widehat\pi_1,\pi_2)=v,
\]
and hence
\[
V(b;\widehat\pi_1,\pi_2)\ge v
\qquad\forall\pi_2\in\Pi_2.
\]
The saddle property of \((\pi_1^0,\pi_2^0)\) also gives
\[
V(b;\pi_1,\pi_2^0)\le v
\qquad\forall\pi_1\in\Pi_1.
\]
In particular, \(V(b;\widehat\pi_1,\pi_2^0)=v\), and consequently
\[
V(b;\pi_1,\pi_2^0)
\le
V(b;\widehat\pi_1,\pi_2^0)
\le
V(b;\widehat\pi_1,\pi_2)
\qquad\forall(\pi_1,\pi_2).
\]
Thus \((\widehat\pi_1,\pi_2^0)\) is a saddle point, equivalently a Nash equilibrium.

Conversely, if \((\bar\pi_1,\bar\pi_2)\) is any Nash equilibrium, then, because the game is zero-sum, it is a saddle point, and its payoff equals the game value \(v\).
Hence
\[
\min_{\pi_2\in\Pi_2}V(b;\bar\pi_1,\pi_2)
=
V(b;\bar\pi_1,\bar\pi_2)
=
v,
\]
so \(\bar\pi_1\) is robust-optimal. The existence of a saddle point also proves that both sets in the statement are nonempty.
\end{proofE}

We use this lemma to demonstrate the impossibility of finding an algorithm that always returns policies Nash equilibrium agnostic of the initial state distribution.  \citet{zhang2020robust} showed that there exist SA-MDPs where there is no Markovian optimal agent policy that is independent of the initial state distribution $b$. They leave open the question of whether the same holds for history dependent policies.  We revisit their concrete example and show that no history dependent policy can achieve the minimax value against all adversarial responses. According to the previous lemma, this implies that no initial-state-distribution-agnostic Nash equilibrium exists. 

\begin{theoremE}[][restate,text link section]

\label{thm:nonexist}
There exists a SA-MDP such that no history-dependent policy is state-robust optimal.
\end{theoremE}


\begin{proofE}
\tikzset{
  state/.style={
    circle,
    draw,
    thick,
    minimum size=1.2cm,
    inner sep=0pt,
    font=\small,
    align=center
  }
}
\begin{figure}
    \centering
    \caption{MDP without History-Dependent State-Robust Policy}
    \label{fig:no-robust-opt}

    \begin{tikzpicture}[scale=0.67, transform shape, node distance=2.5cm, >=Latex]
      \node[state]            (S1)                {S$_1$};
      \node[state, below right=of S1] (S2)         {S$_2$};
      \node[state, below left=of S1] (S3)         {S$_3$};

      \draw[->] (S1) edge[loop above] node[above]
        {$a_1: R = 0$} (S1);
      \draw[->] (S2) edge[loop right] node[right]
        {$a_1: R = 1$} (S2);
      \draw[->] (S3) edge[loop left] node[left]
        {$a_1: R = 1$} (S3);

      \draw[->]
        (S1) to node[midway, above right, align=left]
          {$a_2: R=1$} (S2);

      \draw[->]
        (S2) to node[midway, below]
          {$a_2: R = 0$} (S3.east);

      \draw[->]
        (S3) to node[midway, left, align=right]
          {$a_2: R = 0$} (S1);
    \end{tikzpicture}
\end{figure}

We begin with a high-level overview of the proof and then proceed through the details.

Let $\pi_1^A$ be the policy that always plays $a_1$. In the SA-MDP of \cref{fig:no-robust-opt}, $\pi_1^A$ achieves robust value $0$ at $s_1$ and $\tfrac{1}{1-\gamma}$ at $s_2,s_3$, while uniformly mixing $a_1,a_2$ achieves $\tfrac{1}{2(1-\gamma)}$ everywhere. Any state-robust optimal $\pi_1$ must therefore match $\pi_1^A$ at $s_2,s_3$ and strictly improve on it at $s_1$. To improve at $s_1$ against the identity adversary, $\pi_1$ must, after some first observed (i.e. reported by adversary) history $h^\ast = (s_1,a_1,\ldots,s_1)$, place positive probability on $a_2$. But the agent conditions only on reported histories: an adversary starting from true state $s_3$ can fabricate the same $h^\ast$ by reporting $s_1$ at every step, since $\pi_1$ plays $a_1$ surely along the prefix and the true state remains $s_3$. Triggering $a_2$ at true $s_3$ strictly reduces the return, so $\underline V(s_3;\pi_1) < \tfrac{1}{1-\gamma}$, contradicting state-robust optimality at $s_3$.
For a history-dependent agent policy $\pi_1$ and state $s \in S$, define the worst-case value
\[
\underline V(s;\pi_1) \;:=\; \min_{\pi_2} V(s;\pi_1,\pi_2).
\]
A policy $\pi_1$ is \emph{state-robust optimal} if $\underline V(s;\pi_1) \geq \underline V(s;\pi_1')$ for every state $s$ and every alternative policy $\pi_1'$.

Consider the SA-MDP in \cref{fig:no-robust-opt}. Let $\pi_1^A$ denote the policy that always plays $a_1$, regardless of the reported history. Direct computation gives
\[
\underline V(s_1;\pi_1^A) = 0, \qquad \underline V(s_2;\pi_1^A) = \underline V(s_3;\pi_1^A) = \frac{1}{1-\gamma}.
\]
For $\gamma = 0.99$, these values are $0$, $100$, and $100$, respectively.

Now let $\pi_1^B$ be the policy that plays each of $\{a_1, a_2\}$ with probability $\tfrac{1}{2}$ at every history. Under $\pi_1^B$, the expected one-step reward is $\tfrac{1}{2}$ in every state, so
\[
\underline V(s;\pi_1^B) \;=\; \frac{1}{2(1-\gamma)} \quad \text{for all } s \in \{s_1, s_2, s_3\},
\]
which equals $50$ when $\gamma = 0.99$.

Therefore, any state-robust optimal policy $\pi_1$ must satisfy
\[
\underline V(s_1;\pi_1) \geq 50, \qquad \underline V(s_2;\pi_1) \geq 100, \qquad \underline V(s_3;\pi_1) \geq 100.
\]
Since no policy can achieve more than $\tfrac{1}{1-\gamma} = 100$ from $s_2$ or $s_3$, any state-robust optimal policy must in fact satisfy
\begin{equation}
\label{eq:robust-opt-conditions}
\underline V(s_2;\pi_1) = \underline V(s_3;\pi_1) = \frac{1}{1-\gamma} \quad \text{and} \quad \underline V(s_1;\pi_1) \geq \frac{1}{2(1-\gamma)} > 0.
\end{equation}

\medskip

\noindent\textbf{Contradiction argument.} Assume for contradiction that there exists a history-dependent policy $\pi_1$ satisfying \eqref{eq:robust-opt-conditions}.

Let $\pi_2^I$ denote the \emph{identity adversary} that always reports the true state. Since $\underline V(s_1;\pi_1) > 0$, we have in particular
\[
V(s_1;\pi_1,\pi_2^I) \;\geq\; \underline V(s_1;\pi_1) \;>\; 0.
\]
Hence, starting from true state $s_1$ against $\pi_2^I$, there must exist some first time step $n$ at which $\pi_1$ assigns positive probability to action $a_2$. Otherwise, $\pi_1$ would play $a_1$ surely at every step in $s_1$, yielding value $0$, which contradicts $V(s_1;\pi_1,\pi_2^I) > 0$.

By minimality of $n$, along the identity-adversary play starting from $s_1$, the reported history up to time $n$ must be
\[
h^* \;=\; (s_1, a_1, s_1, a_1, \ldots, s_1, a_1, s_1),
\]
i.e., $n$ repeated reports of $s_1$ with action $a_1$ taken with probability $1$ at all earlier time steps. After this history, the policy plays $a_2$ with positive probability:
\[
\Pr_{\pi_1}\!\bigl[a_2 \,\big|\, h^*\bigr] \;>\; 0.
\]

\medskip

\noindent\textbf{Mimicking the history from $s_3$.} Now consider instead starting from true state $s_3$, and let $\pi_2'$ be the adversary that reports $s_1$ for the first $n$ observations (recall $s_1 \in B(s_3)$, so this is feasible). Since $\pi_1$ conditions only on the \emph{reported} history, it plays $a_1$ with probability $1$ on the first $n-1$ steps---exactly as it does along $h^*$ from true state $s_1$. Because $s_3$ is absorbing under action $a_1$, the true state remains $s_3$ throughout these steps.

At time $n$, the reported history is again $h^*$, so the policy plays $a_2$ with positive probability. However, at true state $s_3$, action $a_2$ yields reward $0$ and transitions out of $s_3$, while $a_1$ yields reward $1$ and keeps the agent in $s_3$. Conditioning on the event $\{a_2 \text{ played at step } n\}$, which occurs with positive probability under $\pi_1$ given $h^*$, the discounted return is strictly less than $\tfrac{1}{1-\gamma}$. Therefore
\[
V(s_3;\pi_1,\pi_2') \;<\; \frac{1}{1-\gamma},
\]
which implies
\[
\underline V(s_3;\pi_1) \;\leq\; V(s_3;\pi_1,\pi_2') \;<\; \frac{1}{1-\gamma}.
\]

\medskip

This contradicts \eqref{eq:robust-opt-conditions}, which required $\underline V(s_3;\pi_1) = \tfrac{1}{1-\gamma}$. Hence no such history-dependent policy $\pi_1$ can exist, and no history-dependent policy is state-robust optimal in this SA-MDP.
\end{proofE}

\subsection{Why History Matters}\label{sec:history}

Most models assume Markovian (memoryless) policies for both the agent and the adversary, meaning both the adversary’s and agent's actions depend only on the current state or observation and not on past history. This yields a tractable two-player zero-sum framework with stationary, optimal policies. However, a purely Markovian agent might miss opportunities that arise from factoring in the history of play. Consider \autoref{fig:historymatters} where the transition probabilities do not depend on the actions, the rewards are inverted for state $s_1$ and $s_2$, state $s_3$ deterministically transitions to state $s_1$ without emitting a reward, and $B(s_1) = B(s_2) = \{s_1, s_2\}, \; B(s_3) = \{s_3\}$.

\tikzset{
  state/.style={
    circle,
    draw,
    thick,
    minimum size=1.2cm,
    inner sep=0pt,
    font=\small,
    align=center
  }
}

\begin{figure}
    \centering

    \begin{subfigure}[t]{0.49\linewidth}
        \centering
        \caption{History Matters}
        \label{fig:historymatters}

        \begin{tikzpicture}[scale=0.60, transform shape, node distance=3cm, >=Latex]
          \node[state, label=above:{\shortstack[l]{$a_1: R = 1$\\$a_2 : R = -1$}}]
                (S1) {S$_1$};

          \node[state, label=above:{\shortstack[l]{$a_1: R = -1$\\$a_2 : R = 1$}},
                right=5cm of S1] (S2) {S$_2$};

          \node[state, label=below:{$a_1, a_2: R = 0$},
                right=2cm of S1, yshift=-2.5cm] (S3) {S$_3$};

          \draw[->] (S1) edge[loop left] node[align=left] {$1/2$} (S1);
          \draw[->] (S2) edge[loop right] node[align=left] {$2/5$} (S2);

          \draw[->, bend left]
            (S1) to node[align=left, yshift=0.5cm] {$1/2$} (S2);

          \draw[->, bend left]
            (S2) to node[align=left, yshift=0.5cm] {$2/5$} (S1);

          \draw[->, bend left]
            (S2) to node[align=left, yshift=-0.5cm] {$1/5$} (S3);

          \draw[->, bend left]
            (S3) to node[align=left, yshift=-0.5cm] {$1$} (S1);
        \end{tikzpicture}
    \end{subfigure}
    \hfill
    \begin{subfigure}[t]{0.49\linewidth}
        \centering
        \caption{History Matters More}
        \label{fig:historymattersmore}

        \begin{tikzpicture}[scale=0.60, transform shape, node distance=3cm, >=Latex]
          \node[state, label=above:{\shortstack[l]{$a_1: R = 1$\\$a_2 : R = -1$}}]
                (S1) {S$_1$};

          \node[state, label=above:{\shortstack[l]{$a_1: R = -1$\\$a_2 : R = 1$}},
                right=5cm of S1] (S2) {S$_2$};

          \node[state, label=below:{$a_1, a_2: R = 0$},
                right=3cm of S1, yshift=-2.5cm] (S3) {S$_3$};

          \node[state, label=below:{\shortstack[l]{$a_1: R = -0.1$\\$a_2 : R = 0.5$}},
                below=2cm of S1] (S4) {S$_4$};

          \draw[->] (S1) edge[loop left] node[align=left] {$1/2$} (S1);
          \draw[->] (S2) edge[loop right] node[align=left] {$2/5$} (S2);

          \draw[->, bend left]
            (S1) to node[align=left, yshift=0.5cm] {$1/2$} (S2);

          \draw[->, bend left]
            (S2) to node[align=left, yshift=0.5cm] {$2/5$} (S1);

          \draw[->, bend left]
            (S2) to node[align=left, yshift=-0.5cm] {$1/5$} (S3);

          \draw[->, bend left]
            (S3) to node[align=left, yshift=-0.5cm] {$1/2$} (S1);

          \draw[->, bend left]
            (S3) to node[align=left, yshift=-0.5cm] {$1/2$} (S4);

          \draw[->]
            (S4) to node[left] {$1$} (S1);
        \end{tikzpicture}
    \end{subfigure}

    \caption{Examples illustrating the impact of history-dependent policies.}
    \label{fig:historymatters_combined}
\end{figure}

This setting is already complex enough for it to become difficult to analytically determine the best Markovian policy, not to mention the best history dependent policy. However, we can assume that in a Markovian equilibrium, the adversary would focus its manipulations on $s_1$ and $s_2$, as an incorrect action in $s_3$ or $s_4$ simply has less of a negative effect on the agent. In a history dependent setting, however, the presence of $s_4$ allows the adversary to reduce the certainty the agent had in the previous example of transitioning from $s_3$ to $s_1$. Thus, we would expect the adversary to increase the misreports between $s_1$ and $s_4$ in a history dependent environment. The inherent complexity of even this small MDP and the interplay between agent and attacker policy motivate the following study of an algorithm to find approximate history dependent Nash equilibria. 

Allowing for history dependent strategies vastly increases the strategy space and brings new analytical challenges. As shown in Theorem \ref{thm:nonexist}, history dependent state-robust Nash equilibrium policies (i.e. policies that are optimal for every state) do not always exist.  Therefore, we seek a Nash equilibrium strategy profile for a given initial state distribution. To compute such an equilibrium under history dependence, we modify an algorithm developed by \citet{horak2023solving}. The authors introduce an HSVI-based solver for one-sided partially observable stochastic games, where one player (the attacker) has full state information and the other (the agent) has only partial observability. This algorithm can efficiently approximate equilibrium strategies even in long-horizon problems. Our attacker–agent interaction fits the one-sided paradigm: the attacker knows the true state, while the agent only sees perturbed observations. By leveraging Horak's algorithm, we can obtain a history dependent equilibrium policy pair for our game given the initial state distribution.

\subsection{Modified Zero-Sum OS-POSG to Find Epsilon-Equilibria}\label{sec:modified}

We now state one of our main theorems: that any SA-MDP can be transformed into a strategically equivalent zero-sum one-sided POSG.

\begin{theoremE}[][restate, text link={See \hyperref[proof:prAtEnd\pratendcountercurrent]{remaining details of the proof} on page~\pageref{proof:prAtEnd\pratendcountercurrent}.}]
\label{thm:strategic-equivalence}
Any SA-MDP $G_{\text{seq}}$ can be transformed into a strategically equivalent constrained zero-sum one-sided POSG $\widehat{G}$ (\cref{def:constrained-zoposg}) in which both players act simultaneously at every stage. The solution strategy to $\widehat{G}$ can be mapped back to a unique solution strategy of $G_{\text{seq}}$.
\end{theoremE}
\begin{proof}
We present a high level presentation of the proof, defining its key mathematical objects in the text and its details in the appendix. The underlying idea in the transformation is to encode each sequential SA-MDP as two simultaneous micro-stages while removing all non-decisive degrees of freedom. We introduce alternating dummy actions for both the agent and the adversary. Thus, each player will still have to choose an action at every stage, but will only need to make meaningful decisions every other stage, maintaining the nature of sequential decisions within a simultaneous move game. In order to accommodate the dummy actions, we also introduce dummy states, which act as placeholders while the agent takes the meaningful action that will actually inform the transition probabilities.

\subsubsection*{Defining the Transformation}
We begin with a sequential observation space attack game  $G_\mathrm{seq} = (S, A_1,R,p,\gamma,B)$
and construct a constrained zero-sum OS-POSG 
$\widehat{{G}}=(\widehat{S},\,\widehat{A}_{1},\,\widehat{A}_{2},\,\widehat{O},\,\widehat{T},\,\widehat{R},\,\sqrt{\gamma}, L_1, L_2)$
that is strategically equivalent to $G_{\mathrm{seq}}$ but meets the \emph{simultaneous} action requirements of \citet{horak2023solving}. We define the components of $\widehat{G}$ with the components of $G_{\mathrm{seq}}$ and necessary dummy actions and states. 

\paragraph{State space.} For every $s\in S$ create two states, an \emph{original} state $s$ and a \emph{dummy} state $\Bar{s}$. Hence,
$\widehat{S}=S\cup \Bar{S}, \Bar{S}=\{\Bar{s}:s\in S\}.$

\paragraph{Action sets.}
Introduce distinguished dummy actions $\bot_1 \notin A_1$ for the agent and
$\bot_2 \notin S$ for the adversary, and define
$  \widehat{A}_{1} = A_1\cup\{\bot_1\}, \widehat{A}_{2} = S\cup\{\bot_2\}.$

Agent 1 represents the agent, so its meaningful actions are the original agent actions in $A_1$. Agent 2 represents the attacker, whose meaningful actions encode the observation reported to the agent and therefore range over $S$. The dummy actions $\bot_1,\bot_2$ are used only on non-decisive micro-stages. To eliminate spurious signaling opportunities, we impose state-dependent feasible action sets:
$L_1(s)=\{\bot_1\}, L_2(s)=B(s)$
for each original state $s\in S$, and $ L_1(\bar s)=A_1, L_2(\bar s)=\{\bot_2\} $
for each dummy state $\bar s\in \bar S$.

\paragraph{Stage decomposition.} One stage of ${G}_{\mathrm{seq}}$ is encoded as \emph{two} micro-stages in $\widehat{{G}}$:
\begin{enumerate}
  \item Suppose the system is in an original state $s$. The attacker chooses a feasible action $a_2 \in L_2(s)=B(s)$, which corresponds to the observation it wants the agent to receive, while the agent is forced to play $\bot_1$. The system deterministically transitions to $\bar s$ and emits observation $o=a_2$.

  \item The system is now in the dummy state $\bar s$. The agent chooses $a_1 \in L_1(\bar s)=A_1$ given the received observation $o=a_2$, while the attacker is forced to play $\bot_2$. The system then transitions according to the original probability $p(\cdot\mid s,a_1)$ to the next original state $s'$ and emits a special dummy observation $\bar o$.
\end{enumerate}

\paragraph{Transition Kernel $\widehat{T}$.}
For every original state $s\in S$ and feasible attacker action $a_2\in B(s)$,
$\widehat{T}(a_2,\bar s \mid s,\bot_1,a_2)=1.$ For every dummy state $\bar s\in \bar S$ and agent action $a_1\in A_1$,
$
\widehat{T}(\bar o,s'\mid \bar s,a_1,\bot_2)=p(s'\mid s,a_1).
$
All other transitions have probability zero (equivalently, they correspond to infeasible action tuples under the state-dependent constraints above).

Note that the first micro-stage depends only on the attacker’s meaningful action, while the second micro-stage depends only on the agent's meaningful action. The dummy actions never affect transitions, rewards, or observations.

\paragraph{Reward.} Define 
  $\widehat{R}(s,{a}_{1},a_{2}) = 0, \widehat{R}(\bar{s},a_{1},{a}_{2}) = \frac{1}{\sqrt{\gamma}}R(s,a_{1})$
where the $\frac{1}{\sqrt{\gamma}}$ accounts for the rewards being applied at odd indices.

\paragraph{Observations.} The agent's observation alphabet is $\widehat{O}=S\cup\{\bar{o}\}$. Player~1 receives the attacker-selected observation $a_2$ at the first micro-stage and the dummy observation $\bar{o}$ at the second micro-stage. The observation $\bar{o}$ signals that the next micro-stage is attacker-decisive and the agent is constrained to play $\bot_1$. Player~2 always observes the full state.

\paragraph{Discount Factor.}
The transformed game has discount factor $\sqrt{\gamma}$. Since the only non-zero rewards occur at the second micro-stage of each encoded step, we scale rewards on dummy states by $1/\sqrt{\gamma}$. Hence
    $(\sqrt{\gamma})^{2t+1}\widehat R(\bar s,a_1,\bot_2) 
    = (\sqrt{\gamma})^{2t+1}\frac{1}{\sqrt{\gamma}}R(s,a_1) = \gamma^t R(s,a_1)$,   
so the transformed discounted return matches the original discounted return exactly.

Consider the following visualization (\autoref{fig:classicformulation}) of how one stage in the  sequential observation space attack game can be simulated by a simultaneous move game. We will consider the first stage of the game, transitioning from $s^{0}$ to either $s_1$ or $s_2$. 

\begin{figure}
    \centering

    \begin{subfigure}[t]{0.49\linewidth}
        \centering
        \caption{Sequential Move Formulation}
        \label{fig:classicformulation}

        \begin{tikzpicture}[
            scale=0.60, transform shape,
            ->, >=Stealth,
            node distance=2.5cm,
            every state/.style={draw, circle, minimum size=6mm, font=\small}
          ]
          \node[state] (s0) {$\mathbf{s^{0}}$};
          \node[state, right=5cm of s0, yshift=2cm] (s1) {$\mathbf{s_1}$};
          \node[state, right=5cm of s0, yshift=-2cm] (s2) {$\mathbf{s_2}$};

          \node[state, right=1cm of s0, draw=black, fill=white] (o0) {$o^t$};

          \node[state, above=1cm of o0, draw=black, fill=white] (aa0) {$a^t_2$};

          \coordinate (o1) at ($(s1)+(1cm,0)$);
          \coordinate (o2) at ($(s2)+(1cm,0)$);

          \coordinate (s3) at ($(s1)+(5cm, 1.5cm)$);
          \coordinate (s5a) at ($(s1)+(5cm,-2cm)$);
          \coordinate (s4) at ($(s2)+(5cm,-1.5cm)$);
          \coordinate (s5b) at ($(s2)+(5cm, 0cm)$);

          \path
            (s0) edge (o0)
            (s1) edge (o1)
            (s2) edge (o2);

          \path (aa0) edge (o0);

          \path
            (o0) edge node[above, font=\footnotesize, align=center] {
              $a_1=0.5$\\
              $a_2=0.7$
            } (s1)
                 edge node[below, font=\footnotesize, align=center] {
              $\quad$\\
              $a_1=0.5$\\
              $a_2=0.3$
            } (s2);
        \end{tikzpicture}
    \end{subfigure}
    \hfill
    \begin{subfigure}[t]{0.49\linewidth}
        \centering
        \caption{Simultaneous Move Formulation}
        \label{fig:zoposgformulation}

        \begin{tikzpicture}[
            scale = 0.60, transform shape,
            ->, >=Stealth,
            node distance=2.5cm,
            every state/.style={draw, circle, minimum size=6mm, font=\small}
          ]

          \node[state]                   (s0) {$\mathbf{s^{0}}$};
          \node[state, right=9cm of s0, yshift=1.5cm] (s1) {$\mathbf{s_1}$};
          \node[state, below=of s1]     (s2) {$\mathbf{s_2}$};

          \node[state, right=1cm of s0, draw=black, fill=white] (o0) {$o^{t}$};
          \node[state, left=1cm of s1, draw=black, fill=white] (o1) {$\Bar{o}$};
          \node[state, left=1cm of s2, draw=black, fill=white] (o2) {$\Bar{o}$};

          \node[state, right=1cm of o0, draw=black, fill=white] (s0') {$\mathbf{s^{0,d}}$};

          \path
            (s0) edge node[above, font=\footnotesize, align=center] {
              $(*, a_a)$
            } (o0);

          \path
            (o0) edge (s0');

          \path
            (o1) edge (s1)
            (o2) edge (s2);

          \path
            (s0') edge node[above, font=\footnotesize, align=center] {
              $(a_1, *)=0.5$\\
              $(a_2, *)=0.7$\\
            } (o1)
                 edge node[below, font=\footnotesize, align=center] {
              $\quad$\\
              $(a_1,*)=0.5$\\
              $(a_2,*)=0.3$
            } (o2);

        \end{tikzpicture}
    \end{subfigure}

    \caption{Strategically equivalent forms of an OS-POSG.}
\end{figure}

Note that in the sequential setting, the observation emitted by the system is modified according to the attacker's action. This observation informs the agent's action, while the true state dictates the actual transition probabilities.

\subsubsection*{Strategic Equivalence} We now show that the transformed simultaneous zero-sum OS-POSG $\widehat G$ is strategically equivalent to the original sequential observation-space attack game $G_{\mathrm{seq}}$. We establish the following three properties: \begin{enumerate} 
\item \emph{Strategy correspondence:} after restricting each non-decisive micro-stage to its singleton dummy action, there is a bijection \[ \Phi_i:\Pi_i \longrightarrow \Sigma_i^{\mathrm{feas}}, \qquad i\in\{1,2\}, \] between each player's behavioral strategies in $G_{\mathrm{seq}}$ and feasible behavioral strategies in $\widehat G$.
\item \emph{Pairwise payoff preservation:} for every initial belief $b$, its canonical embedding $\widehat b$ ($\widehat b(s)=b(s)$ for $s\in S$, $\widehat b(\bar s)=0$) in the transformed state space, and every strategy pair $(\pi_1,\pi_2)$, \[ V(b;\pi_1,\pi_2) = \widehat V \bigl( \widehat b; \Phi_1(\pi_1),\Phi_2(\pi_2) \bigr). \]
\item \emph{$\epsilon$-equilibrium correspondence:} a strategy pair is an $\epsilon$-Nash equilibrium, equivalently an $\epsilon$-saddle point, of $G_{\mathrm{seq}}$ if and only if its image under $(\Phi_1,\Phi_2)$ is an $\epsilon$-Nash equilibrium of $\widehat G$ over feasible strategies. \end{enumerate}
These three results prove \cref{thm:strategic-equivalence}. Result 1 guarantees that a strategy found in $\widehat{G}$ will have a corresponding unique strategy in $G_{\text{seq}}$. Result 2 guarantees that the values of the Nash equilibria in both games are equal. Finally, Result 3 guarantees that the strategy in $G_{\text{seq}}$ still achieves an $\epsilon$-approximate Nash equilibrium.

\begin{proofE}
After establishing the key mathematical objects and statements to prove in the main text, we now proceed with the actual proof.
\subsubsection*{Preliminaries and Definitions}
Let $\pi_1$ (resp. $\pi_2$) denote the agent (resp. adversary) strategy in the sequential game $G_{\mathrm{seq}}$, with value
$$
  V(b;\pi_1,\pi_2)
  =\mathbb{E}_{s^{0}\sim b}\Bigl[ \sum_{t=0}^\infty \gamma^t R\bigl(s^t,a_1^t\bigr) \Bigr],
$$
and let $\sigma_1,\sigma_2$ be analogous strategies in the simultaneous game $\widehat G$ with value
$$
  \widehat V(\widehat b;\sigma_1,\sigma_2)
  =\mathbb{E}_{x^{0}\sim\widehat b}\Bigl[ \sum_{n=0}^\infty (\sqrt\gamma)^{n} \widehat R\bigl(x^n,a_1^n,a_2^n\bigr)\Bigr],
$$
where $x^n,a^n_i$  denote states/actions at micro--step.

Define a bijection $\iota:(a_2^0,o^0,a_1^0,a_2^1,o^1,a_1^1,\dots)
\rightarrow
(\bot_1,a_2^0,o^0,a_1^0,\bot_2,\bar o,\bot_1,a_2^1,o^1,a_1^1,\bot_2,\bar o,...)$
that interleaves each stage of $G_{\mathrm{seq}}$ into two constrained simultaneous micro-stages of $\widehat G$: at step $2t$ the attacker chooses $a_2^t$ while the agent is forced to play $\bot_1$, and at step $2t+1$ the agent chooses $a_1^t$ while the adversary is forced to play $\bot_2$. Because the non-decisive actions are fixed and observations are emitted in the same order as in the sequential game, the mapping $\iota$ is bijective and preserves each player's information set.

Given a sequential strategy pair $(\pi_1,\pi_2)$, define the feasible simultaneous strategy pair $(\sigma_1,\sigma_2)$ by
\begin{align*}
\sigma_1(\cdot\mid\iota(h^{\mathrm{seq}}),o^{2t})=\pi_1(\cdot\mid h^{\mathrm{seq}},o^t),\quad\sigma_2(\cdot\mid\iota(h^{\mathrm{seq}}),x^{2t})=\pi_2(\cdot\mid h^{\mathrm{seq}}),
\end{align*}
with
\begin{align*}
\sigma_1(\cdot\mid \iota(h^{\mathrm{seq}})) =\delta_{\bot_1}, \qquad \sigma_2(\cdot\mid\iota(h^{\mathrm{seq}}),x^{2t+1})=\delta_{\bot_2}.
\end{align*}

Here $h^{\mathrm{seq}}$ denotes the relevant player’s private history. 
For player 1, $\iota$ inserts the forced dummy components into its action-observation history, while for player 2, it inserts the dummy states and actions into its full state-action history.

Conversely, every feasible simultaneous strategy pair induces a unique sequential strategy pair by deleting the forced dummy moves.

\begin{definition}[$\epsilon$--equilibrium concepts]
\label{def:epsilon-equilibria}
Recall that player $1$ is the maximizing player and player $2$ is the
minimizing player.

A feasible strategy pair $(\sigma_1^*,\sigma_2^*)\in
\Sigma_1^{\mathrm{feas}}
\times
\Sigma_2^{\mathrm{feas}}$
is an $\epsilon$--Nash equilibrium, equivalently an
$\epsilon$--saddle point, of $\widehat G$ if
\begin{align}
\widehat V
(\widehat b;\sigma_1,\sigma_2^*)
&\le
\widehat V
(\widehat b;\sigma_1^*,\sigma_2^*)
+\epsilon
&&
\forall\,
\sigma_1\in\Sigma_1^{\mathrm{feas}},
\label{eq:sim-eps-player1}
\\
\widehat V
(\widehat b;\sigma_1^*,\sigma_2)
&\ge
\widehat V
(\widehat b;\sigma_1^*,\sigma_2^*)
-\epsilon
&&
\forall\,
\sigma_2\in\Sigma_2^{\mathrm{feas}}.
\label{eq:sim-eps-player2}
\end{align}

A strategy pair
$(\pi_1^*,\pi_2^*)\in
\Pi_1\times\Pi_2$
is an $\epsilon$--Nash equilibrium, equivalently an
$\epsilon$--saddle point, of $G_{\mathrm{seq}}$ if
\begin{align}
V(b;\pi_1,\pi_2^*)
&\le
V(b;\pi_1^*,\pi_2^*)
+\epsilon
&&
\forall\,\pi_1\in\Pi_1,
\label{eq:seq-eps-player1}
\\
V(b;\pi_1^*,\pi_2)
&\ge
V(b;\pi_1^*,\pi_2^*)
-\epsilon
&&
\forall\,\pi_2\in\Pi_2.
\label{eq:seq-eps-player2}
\end{align}
\end{definition}

\subsubsection*{Theorem Statements and Proofs}

\begin{lemma}[Exact value preservation]
\label{thm:exactvaluepreservation}
Under optimal play,
\begin{align*}
  \max_{\pi_1}\min_{\pi_2} V(b;\pi_1,\pi_2) = \max_{\sigma_1}\min_{\sigma_2} \widehat V (\widehat b;\sigma_1,\sigma_2).
\end{align*}
\end{lemma}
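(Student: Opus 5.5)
The plan is to derive the equality from two ingredients: the strategy bijection $\Phi_i$ built from $\iota$ above, and pairwise payoff preservation $V(b;\pi_1,\pi_2)=\widehat V(\widehat b;\Phi_1(\pi_1),\Phi_2(\pi_2))$. Throughout, the maximum and minimum on the right-hand side are taken over the feasible strategy sets $\Sigma_i^{\mathrm{feas}}$, as the constrained game $\widehat G$ requires. Once both ingredients are in place, the identity is a relabelling:
\[
\max_{\sigma_1}\min_{\sigma_2}\widehat V(\widehat b;\sigma_1,\sigma_2)
=\max_{\pi_1}\min_{\pi_2}\widehat V\bigl(\widehat b;\Phi_1(\pi_1),\Phi_2(\pi_2)\bigr)
=\max_{\pi_1}\min_{\pi_2}V(b;\pi_1,\pi_2).
\]
Here the first equality uses surjectivity and injectivity of $\Phi_i$ onto $\Sigma_i^{\mathrm{feas}}$, and the second uses payoff preservation. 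Attainment of the max and min on the left follows from Lemma~\ref{lemma:robust-nash}, which gives a saddle point of $G_{\mathrm{seq}}$ and hence nonempty argmax and argmin sets. Transporting that saddle point through $\Phi$ yields attainment on the right as well.

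\textbf{Step 1: bijection.} Apply the definition of $(\sigma_1,\sigma_2)$ from $(\pi_1,\pi_2)$ to each player separately. A feasible $\sigma_i$ is forced to a point mass on its dummy action at the non-decisive micro-stages, because $L_1(s)=\{\bot_1\}$ and $L_2(\bar s)=\{\bot_2\}$. Its remaining freedom is therefore exactly a choice of distribution at the decisive information sets. Since $\iota$ is a bijection on histories that preserves information sets, these decisive sets are in one-to-one correspondence with player $i$'s histories in $G_{\mathrm{seq}}$. The available actions also match: $L_2(s)=B(s)$ for the adversary and $L_1(\bar s)=A_1$ for the agent. Deleting the forced dummy moves gives the inverse map.

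\textbf{Step 2: payoff preservation.} Fix $(\pi_1,\pi_2)$ and compare the induced path measures. I would show by induction on $t$ that the law of the sequential history $(s^0,o^0,a_1^0,\dots,s^t)$ under $(\pi_1,\pi_2,b)$ equals the law of the $\iota$-image, restricted to micro-steps $0,\dots,2t$, under $(\Phi_1(\pi_1),\Phi_2(\pi_2),\widehat b)$. The inductive step multiplies matching factors. First, $\pi_2(o^t\mid h)$ matches $\sigma_2$, and the transition $\widehat T(o^t,\bar s^t\mid s^t,\bot_1,o^t)=1$ is deterministic. Second, $\pi_1(a_1^t\mid\cdot)$ matches $\sigma_1$ at the corresponding observation history. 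Third, $p(s^{t+1}\mid s^t,a_1^t)$ matches $\widehat T(\bar o,s^{t+1}\mid\bar s^t,a_1^t,\bot_2)$.

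The rewards then line up term by term. Micro-step $2t$ contributes $0$. Micro-step $2t+1$ contributes $(\sqrt\gamma)^{2t+1}\gamma^{-1/2}R(s^t,a_1^t)=\gamma^tR(s^t,a_1^t)$. Both series are absolutely convergent because rewards are bounded and $\gamma<1$, so expectation and summation can be exchanged, which gives equality of values.

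\textbf{Main obstacle.} The delicate point is Step 1's claim that $\iota$ preserves information sets, particularly for player~1. One must check that $\bar o$ carries no information beyond the stage boundary. One must also check that player~1's own forced actions $\bot_1$ and the state-dependent constraints leak nothing. If the feasible set $L_1$ differed across states player~1 cannot distinguish, the agent could infer something about the hidden state. Here $L_1$ depends only on whether the state is original or dummy, and this parity is common knowledge from the micro-step index, so no leakage occurs. I would verify this explicitly before invoking the bijection.
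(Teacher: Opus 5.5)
Your proposal is correct and follows essentially the same route as the paper. You translate strategy pairs through the $\iota$-induced bijection, match the path measures and the $\sqrt{\gamma}$-rescaled rewards to get $V(b;\pi_1,\pi_2)=\widehat V(\widehat b;\Phi_1(\pi_1),\Phi_2(\pi_2))$, and then relabel the max--min; your induction on path laws and your information-set check make explicit what the paper compresses into ``measure preservation'' and ``feasibility preservation.'' One small refinement: a saddle point by itself only shows the inner minimum is attained at the saddle strategy, so for arbitrary $\pi_1$ use the compactness and joint continuity from the proof of Lemma~\ref{lemma:robust-nash}, or simply note that the bijection makes the two sets of attainable values identical, so attainment (or a sup--inf reading) carries over automatically.
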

\begin{proof}
We use the bijection $\iota$ to translate any sequential strategy pair $(\pi_1,\pi_2)$ into a feasible simultaneous pair $(\sigma_1,\sigma_2)$ and vice-versa, without altering the induced probability measure over outcome streams. Observe:
\begin{itemize}
\item[1.] \emph{Reward alignment:} by construction, the only non-zero reward in the transformed game occurs at the second micro-stage $x^{2t+1}$, and
\[
\widehat R(x^{2t+1},a_1^{2t+1},a_2^{2t+1})
=
\frac{1}{\sqrt{\gamma}}R(s^t,a_1^t).
\]

\item[2.] \emph{Discount consistency:} therefore
\begin{align*}
(\sqrt\gamma)^{2t+1}\widehat R(x^{2t+1},a_1^{2t+1},a_2^{2t+1})
    = 
(\sqrt\gamma)^{2t+1}\frac{1}{\sqrt{\gamma}}R(s^t,a_1^t)
    = 
\gamma^t R(s^t,a_1^t).
\end{align*}

\item[3.] \emph{Measure preservation:} at each original stage, corresponding strategies assign the same probabilities to the attacker’s report and the agent’s action. The inserted moves are deterministic, and the second micro-stage uses the same transition probability \(p(s^{t+1}\mid s^t,a_1^t)\). Hence, after deleting the dummy components, corresponding trajectories have the same probability.

\item[4.] \emph{Feasibility preservation:} the singleton constraints ensure that the transformed game introduces no additional strategic choices beyond those already present in $G_{\mathrm{seq}}$.
\end{itemize}
Therefore, for any corresponding strategy pair,
\[
V(b;\pi_1,\pi_2)=\widehat V(\widehat b;\sigma_1,\sigma_2).
\]
Taking max--min over admissible strategies yields the desired equality.
\end{proof}

\begin{lemma}[$\epsilon$-equilibrium correspondence] \label{lem:epsilon-equilibrium-correspondence} Fix an initial belief $b$. For each player $i\in\{1,2\}$, let \[ \Phi_i:\Pi_i \longrightarrow \Sigma_i^{\mathrm{feas}} \] be a bijection between the behavioral strategies in $G_{\mathrm{seq}}$ and the feasible behavioral strategies in $\widehat G$. Suppose that, for every $(\pi_1,\pi_2)$, \[ V(b;\pi_1,\pi_2) = \widehat V \bigl(\widehat b;\Phi_1(\pi_1),\Phi_2(\pi_2)\bigr). \] Then $(\pi_1^*,\pi_2^*)$ is an $\epsilon$-Nash equilibrium of $G_{\mathrm{seq}}$ if and only if $\bigl(\Phi_1(\pi_1^*),\Phi_2(\pi_2^*)\bigr) $ is an $\epsilon$-Nash equilibrium of $\widehat G$. \end{lemma}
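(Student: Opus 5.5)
The plan is a direct transfer of the defining inequalities through the bijections, using pairwise payoff preservation to rewrite every quantity in one game as the corresponding quantity in the other. No structural property of $\widehat G$ beyond the hypotheses of the lemma is needed. Write $\sigma_i^*\coloneqq\Phi_i(\pi_i^*)$.

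For the forward direction, assume $(\pi_1^*,\pi_2^*)$ satisfies \eqref{eq:seq-eps-player1} and \eqref{eq:seq-eps-player2}. To verify \eqref{eq:sim-eps-player1}, take an arbitrary feasible $\sigma_1\in\Sigma_1^{\mathrm{feas}}$. Surjectivity of $\Phi_1$ gives some $\pi_1\in\Pi_1$ with $\Phi_1(\pi_1)=\sigma_1$. Then payoff preservation yields
\[
\widehat V(\widehat b;\sigma_1,\sigma_2^*)=V(b;\pi_1,\pi_2^*)\le V(b;\pi_1^*,\pi_2^*)+\epsilon=\widehat V(\widehat b;\sigma_1^*,\sigma_2^*)+\epsilon .
\]
Inequality \eqref{eq:sim-eps-player2} follows symmetrically, using surjectivity of $\Phi_2$ to pull back an arbitrary feasible $\sigma_2$ and then applying \eqref{eq:seq-eps-player2}.

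For the converse, assume $(\sigma_1^*,\sigma_2^*)$ is an $\epsilon$-Nash equilibrium of $\widehat G$ over feasible strategies. Take an arbitrary $\pi_1\in\Pi_1$. Its image $\Phi_1(\pi_1)$ is feasible by definition of the codomain, so \eqref{eq:sim-eps-player1} applies to it. Payoff preservation then converts that inequality into $V(b;\pi_1,\pi_2^*)\le V(b;\pi_1^*,\pi_2^*)+\epsilon$. The same argument for player 2 gives \eqref{eq:seq-eps-player2}. Here only the fact that $\Phi_i$ is well defined into $\Sigma_i^{\mathrm{feas}}$ is used; surjectivity is needed only for the forward direction. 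Injectivity is what makes the statement well posed, because it ensures that a solution found in $\widehat G$ pulls back to a \emph{unique} $(\pi_1^*,\pi_2^*)$.

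The main obstacle is not in this lemma, which is essentially bookkeeping, but in its hypotheses. One must make sure the quantifiers range over exactly the same sets on both sides. In particular, deviations in $\widehat G$ must be restricted to feasible strategies, those respecting $L_1,L_2$. Otherwise an unconstrained deviation in $\widehat G$, such as the agent playing a non-dummy action at an original state, would have no preimage under $\Phi_1$, and the forward direction would fail.

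I would therefore state explicitly that Definition~\ref{def:epsilon-equilibria} quantifies only over $\Sigma_i^{\mathrm{feas}}$, and that the payoff identity is the one already established in Lemma~\ref{thm:exactvaluepreservation} via the interleaving map $\iota$. With this, the two directions above close the proof, and the case $\epsilon=0$ recovers exact Nash equilibrium correspondence.
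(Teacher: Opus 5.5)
Your proposal is correct and matches the paper's proof: write $\sigma_i^*=\Phi_i(\pi_i^*)$, pull back an arbitrary feasible deviation through $\Phi_i^{-1}$ and transfer the $\epsilon$-inequalities by payoff preservation, and obtain the converse by pushing arbitrary deviations in $G_{\mathrm{seq}}$ forward through $\Phi_i$. Your note on which direction uses which half of bijectivity slightly sharpens the paper's blanket appeal to bijectivity; injectivity is not actually needed for this lemma, and matters only for the uniqueness claim in Theorem~\ref{thm:strategic-equivalence}.
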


\begin{proof} Write \[\sigma_i^*=\Phi_i(\pi_i^*),\qquad i\in\{1,2\}. \]
Assume first that $(\pi_1^*,\pi_2^*)$ is an $\epsilon$-Nash equilibrium of $G_{\mathrm{seq}}$.
For any feasible deviation $\sigma_1\in\Sigma_1^{\mathrm{feas}}$, bijectivity gives $\pi_1=\Phi_1^{-1}(\sigma_1)$.
Hence,
\begin{align*} 
\widehat V(\widehat b;\sigma_1,\sigma_2^*) &= V(b;\pi_1,\pi_2^*)\\
&\le V(b;\pi_1^*,\pi_2^*)+\epsilon\\
&= \widehat V(\widehat b;\sigma_1^*,\sigma_2^*)+\epsilon.
\end{align*} 
Similarly, for any feasible deviation $\sigma_2\in\Sigma_2^{\mathrm{feas}}$, setting $\pi_2=\Phi_2^{-1}(\sigma_2)$ yields
\begin{align*} 
\widehat V(\widehat b;\sigma_1^*,\sigma_2) &= V(b;\pi_1^*,\pi_2)\\
&\ge V(b;\pi_1^*,\pi_2^*)-\epsilon\\
&= \widehat V(\widehat b;\sigma_1^*,\sigma_2^*)-\epsilon.
\end{align*} Thus, $(\sigma_1^*,\sigma_2^*)$ is an $\epsilon$-Nash equilibrium of $\widehat G$. The converse follows by the same argument, applying the inverse bijections $\Phi_i^{-1}$ to arbitrary deviations in $G_{\mathrm{seq}}$. Finally, the feasible strategy spaces of $\widehat G$ admit no additional deviations at dummy micro-stages, since the acting player has a singleton feasible action set there. Therefore, every feasible deviation in $\widehat G$ corresponds uniquely to a deviation in $G_{\mathrm{seq}}$, as required.
\end{proof}

\begin{lemma}
    [Well-defined strategy correspondence] The bijection $\iota$ ensures a one-to-one correspondence between strategies in $\widehat G$ and strategies in $G_{\text{seq}}$.
\end{lemma}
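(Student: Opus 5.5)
The plan is to build the maps $\Phi_1,\Phi_2$ explicitly from $\iota$ and to check that each is well defined, injective and surjective onto $\Sigma_i^{\mathrm{feas}}$. The argument rests on one structural fact: $\iota$ induces a bijection between each player's private histories in $G_{\mathrm{seq}}$ and that player's \emph{decisive} private histories in $\widehat G$. Every remaining feasible history of $\widehat G$ is non-decisive, and there the feasible action set is a singleton.

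\textbf{Step 1: reachable histories.} I will first characterize the histories of $\widehat G$ that are reachable under feasible play. Starting from $\widehat b$, which is supported on $S$, the kernel $\widehat T$ forces the states to alternate $x^{2t}\in S$ and $x^{2t+1}\in\bar S$. Likewise, the agent's observations alternate between a report $o^{2t}\in S$ and the dummy symbol $\bar o$.

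For the agent, the private history before micro-step $2t+1$ is $\iota(h_1^t)$ followed by $o^{2t}$. Before micro-step $2t$ it is $\iota(h_1^{t-1})$ followed by the forced $\bot_1$ and $\bar o$. For the adversary, the history before $2t$ ends in an original state, and the history before $2t+1$ ends in a dummy state. I will prove by induction on $t$ that the restriction of $\iota$ to prefixes is a bijection between $\mathcal H_i^t$ and the set of reachable decisive histories of player $i$ at micro-step $2t$ (adversary) or $2t+1$ (agent). Injectivity is immediate, because deleting the inserted dummy coordinates inverts $\iota$. Surjectivity follows because the dummy coordinates are forced: there is a single feasible action, the observation is deterministic, and the dummy state $\bar s$ is a function of $s$.

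\textbf{Step 2: the maps.} Define $\Phi_i(\pi_i)=\sigma_i$ by the formulas already given in the proof of \cref{thm:strategic-equivalence}. At decisive histories $\sigma_i$ copies $\pi_i$ through $\iota^{-1}$, and at non-decisive histories it is $\delta_{\bot_i}$. The resulting $\sigma_i$ is feasible, since $\pi_2(\cdot\mid h)\in\Delta(B(s))=\Delta(L_2(s))$, $\pi_1\in\Delta(A_1)=\Delta(L_1(\bar s))$, and the dummy moves lie in their singleton sets.

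\textbf{Step 3: bijectivity.} For injectivity, if $\pi_i\neq\pi_i'$ they differ at some history $h$, so $\Phi_i(\pi_i)$ and $\Phi_i(\pi_i')$ differ at $\iota(h)$ by Step 1. For surjectivity, take any $\sigma_i\in\Sigma_i^{\mathrm{feas}}$. Feasibility forces $\sigma_i=\delta_{\bot_i}$ at every non-decisive history. Setting $\pi_i(\cdot\mid h)=\sigma_i(\cdot\mid\iota(h))$ therefore gives a valid sequential strategy with $\Phi_i(\pi_i)=\sigma_i$.

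\textbf{Main obstacle.} The hardest part is the bookkeeping of Step 1, in particular how unreachable or off-path histories of $\widehat G$ are handled. The agent cannot observe which micro-stage it is in except through the alternation of $\bar o$ and reports. Likewise, behavioral strategies defined at histories inconsistent with the forced structure, such as ones containing $\bot_1$ at an agent-decisive step, would break injectivity.

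I would resolve this by defining $\Sigma_i^{\mathrm{feas}}$ on histories consistent with the constraints $L_1,L_2$, which is the natural domain once infeasible action tuples have probability zero. I would also note that the agent's history is enough to identify the micro-stage, because its observation is in $S$ exactly before a decisive step and equals $\bar o$ exactly before a non-decisive one. With that convention the correspondence is exact, and it also supplies the bijection hypothesis used in \cref{lem:epsilon-equilibrium-correspondence}.
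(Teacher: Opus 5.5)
Your proposal is correct and follows the paper's route. The paper's proof is simply ``by construction of $\iota$,'' and your Steps 1--3 (copy $\pi_i$ through $\iota^{-1}$ at decisive histories, put $\delta_{\bot_i}$ at the singleton-constrained ones, invert by deleting the forced dummy coordinates) are exactly that construction written out.

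The one point to pin down is that both strategy spaces must be indexed by matching history sets. For instance, let $\Sigma_i^{\mathrm{feas}}$ live on precisely the interleaved histories produced by $\iota$, independent of $\widehat b$. Your two stated conventions each fail:
\begin{itemize}
\item Restricting to histories reachable from $\widehat b$, as in your Step 1, breaks injectivity. Some sequential histories have unreachable images, e.g.\ a first report $o^0\notin\bigcup_{s\in\operatorname{supp} b}B(s)$, so two strategies differing only there are identified.
\item Admitting every $L_1,L_2$-consistent history breaks surjectivity. It lets $\sigma_2$ vary freely at transition-inconsistent histories, where $L_2$ need not be a singleton.
\end{itemize}
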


\begin{proof}
    By construction of $\iota$.
\end{proof}
This concludes the proof of \cref{thm:strategic-equivalence}.
\end{proofE} 
\end{proof}

\subsection{Constrained Zero-sum OS-POSG}

\begin{definition}[Information-compatible constrained zero-sum OS-POSG]
\label{def:constrained-zoposg}
An information-compatible constrained zero-sum one-sided partially
observable stochastic game is a tuple
\[
G=(S,A_1,A_2,O,T,R,\gamma,L_1,L_2),
\]
where $(S,A_1,A_2,O,T,R,\gamma)$ is a finite discounted zero-sum
one-sided POSG, player $1$ is the uninformed maximizing player, and
player $2$ is the fully informed minimizing player.

For player $2$,
\[
L_2:S\to 2^{A_2}\setminus\{\emptyset\}
\]
may depend on the true state. For player $1$, the feasible action set is
required to be measurable with respect to player $1$'s information:
whenever two states $s,s'$ may occur at the same player-$1$ information
history,
\[
L_1(s)=L_1(s').
\]
Equivalently, at every reachable player-$1$ information history $h_1$,
there is a nonempty set $L_1(h_1)\subseteq A_1$ that is common to all
states compatible with $h_1$.

A behavioral strategy is feasible if its support is contained in the
corresponding feasible action set at every information history.
\end{definition}

The game is played as before, with the additional requirement that, at player 1 history $h_1$, player 1 chooses $a_1\in L_1(h_1)$, and, at state $s$, player 2 chooses $a_2\in L_2(s)$.
\begin{theoremE}[][restate,text link section]
\label{thm:constrainedZOPOSG}
The HSVI algorithm of \citet{horak2023solving} extends to the constrained zero-sum one-sided POSG in \cref{def:constrained-zoposg} by restricting each local stage game to the players' feasible strategy sets. The algorithm maintains a lower-bound value function $V_{LB}$ and an upper-bound value function $V_{UB}$, written as $V^{\Gamma}_{LB}$ and $V^{\Upsilon}_{UB}$, respectively, in \cref{alg:constrained_hsvi}, satisfying
\[
V_{LB}(b)\leq V^*(b)\leq V_{UB}(b)
\]
for every reachable belief $b$, where $V^*$ is the optimal value function of the constrained game.

Let
\[
\delta=\frac{U_0-L_0}{2},
\]
where $L_0$ and $U_0$ are the discounted minimum and maximum feasible reward
bounds. If $\delta=0$, the initial bounds coincide. If $\delta>0$ and
\[
0<D<\frac{(1-\gamma)\epsilon}{2\delta},
\]
then the algorithm terminates at the initial belief $b^0$ with
\[
V_{UB}(b^0)-V_{LB}(b^0)\leq\epsilon.
\]
The strategy obtained from the lower-bound value function for player~$1$ and the strategy
obtained from the upper-bound value function for player~$2$ are feasible and form an
$\epsilon$-Nash equilibrium. In particular, the result applies to the
constrained OS-POSG $\widehat G$ constructed in
\cref{thm:strategic-equivalence}.
\end{theoremE}
\begin{proofE}
We adapt the proof of \citet{horak2023solving} by restricting the action
sets in each local stage game. We state the parts of their argument that are
unchanged and make explicit where the feasibility restrictions enter.

\paragraph{Model and feasible stage strategies.}
At a player-$1$ information history $h_1$, let $S(h_1)$ be the states that
may be current at that history. By the condition in
\cref{def:constrained-zoposg}, all states in $S(h_1)$ have the common
feasible action set $L_1(h_1)$. Thus, at a belief
$b\in\Delta(S(h_1))$, the stage strategies are
\[
\pi_1\in\Pi_1(h_1)\coloneqq\Delta(L_1(h_1)),
\qquad
\pi_2\in\Pi_2(h_1)\coloneqq
\prod_{s\in S(h_1)}\Delta(L_2(s)).
\]
Player~$1$ therefore uses one distribution over $L_1(h_1)$; only
player~$2$ may condition its action on the current state. Both strategy sets
are nonempty, compact, and convex.

\paragraph{Belief update.}
Fix $b\in\Delta(S(h_1))$, $a_1\in L_1(h_1)$, and
$\pi_2\in\Pi_2(h_1)$. Define
\begin{align*}
p(o,s'\mid b,a_1,\pi_2)
&=
\sum_{s\in S(h_1)}\sum_{a_2\in L_2(s)}
b(s)\pi_2(a_2\mid s)T(o,s'\mid s,a_1,a_2),\\
P(o\mid b,a_1,\pi_2)
&=\sum_{s'\in S}p(o,s'\mid b,a_1,\pi_2).
\end{align*}
For the successor history $h_1'=(h_1,a_1,o)$, the states that may be current
are
\[
S(h_1')=
\left\{
s'\in S:
T(o,s'\mid s,a_1,a_2)>0
\text{ for some }s\in S(h_1),\ a_2\in L_2(s)
\right\}.
\]
Action-observation pairs for which $S(h_1')=\emptyset$ are omitted from all
continuation sums and lower-bound constraints, since their probability is
zero under every feasible stage strategy.
Thus $S(h_1')$ depends only on $S(h_1)$, $a_1$, and $o$, and every posterior
generated by a feasible $\pi_2$ on a remaining branch is supported on
$S(h_1')$.
When $P(o\mid b,a_1,\pi_2)>0$, player~$1$'s posterior after observing its
action $a_1$ and observation $o$ is
\begin{equation}
\label{eq:constrained-posterior}
\tau(b,a_1,\pi_2,o)(s')
=
\frac{p(o,s'\mid b,a_1,\pi_2)}
{P(o\mid b,a_1,\pi_2)}.
\end{equation}
On a zero-probability branch, the posterior may be chosen arbitrarily in
$\Delta(S(h_1'))$. If
player~$1$ uses $\pi_1$, the probability of the branch $(a_1,o)$ is
\[
\Pr_{b,\pi_1,\pi_2}[a_1,o]
=\pi_1(a_1)P(o\mid b,a_1,\pi_2).
\]
The factor $\pi_1(a_1)$ does not appear in
\eqref{eq:constrained-posterior} because player~$1$ conditions on the action
it took. The condition in \cref{def:constrained-zoposg} also ensures that
every positive-probability successor belief is supported on a set of states
with one common feasible action set for player~$1$.

\paragraph{Values and structure.}
For a feasible strategy $\sigma_1$ of player~$1$, define
\[
\underline V(b;\sigma_1)
=\inf_{\sigma_2}
\mathbb E_{b,\sigma_1,\sigma_2}
\left[\sum_{t\geq0}\gamma^{t}R(s^t,a^t_1,a^t_2)\right],
\qquad
V^*(b)=\sup_{\sigma_1}\underline V(b;\sigma_1),
\]
where both optimizations are over feasible behavioral strategies. Since
player~$2$ observes the state, a best response to a fixed $\sigma_1$ can be
chosen separately for each initial state. Hence $\underline V(\cdot;\sigma_1)$ is affine in
$b$ on each $\Delta(S(h_1))$, and $V^*$ is convex there.

Let
\begin{align*}
R_{\min}
&=\min_{s\in S,\,a_1\in L_1(s),\,a_2\in L_2(s)}R(s,a_1,a_2),\\
R_{\max}
&=\max_{s\in S,\,a_1\in L_1(s),\,a_2\in L_2(s)}R(s,a_1,a_2),\\
L_0&=\frac{R_{\min}}{1-\gamma},
\qquad
U_0=\frac{R_{\max}}{1-\gamma},
\qquad
\delta=\frac{U_0-L_0}{2}.
\end{align*}
Then $L_0\leq \underline V(b;\sigma_1)\leq U_0$. As in
\citet{horak2023solving}, every $\underline V(\cdot;\sigma_1)$, and therefore $V^*$, is
$\delta$-Lipschitz in $\lVert\cdot\rVert_1$ on each
$\Delta(S(h_1))$.

\paragraph{Bellman operator.}
Let $V$ be a bounded convex continuous continuation value on the reachable
beliefs. At $b\in\Delta(S(h_1))$, define
\begin{align*}
u_{V,b}(\pi_1,\pi_2)
&=\mathbb E_{b,\pi_1,\pi_2}[R(s,a_1,a_2)]\\
&\quad+\gamma\sum_{a_1\in L_1(h_1)}\sum_{o\in O}
\Pr_{b,\pi_1,\pi_2}[a_1,o]
V\bigl(\tau(b,a_1,\pi_2,o)\bigr),\\
[HV](b)
&=\max_{\pi_1\in\Pi_1(h_1)}
\min_{\pi_2\in\Pi_2(h_1)}u_{V,b}(\pi_1,\pi_2),
\end{align*}
where zero-probability branches contribute zero. For fixed $\pi_2$, this
payoff is affine in $\pi_1$. For fixed $\pi_1$, each continuation term is
the perspective of the convex function $V$ and is therefore continuous and
convex in $\pi_2$. Sion's minimax theorem gives
\[
\max_{\pi_1\in\Pi_1(h_1)}\min_{\pi_2\in\Pi_2(h_1)}u_{V,b}(\pi_1,\pi_2)
=
\min_{\pi_2\in\Pi_2(h_1)}\max_{\pi_1\in\Pi_1(h_1)}u_{V,b}(\pi_1,\pi_2).
\]
Thus the local stage game has a value and feasible optimal strategies.
The value-composition argument of \citet{horak2023solving} applies on each
$\Delta(S(h_1))$. It follows that $H$ maps convex $\delta$-Lipschitz
continuation values to convex $\delta$-Lipschitz values on the current
simplex.

For any bounded $V$ and $W$,
\[
\lVert HV-HW\rVert_\infty
\leq\gamma\lVert V-W\rVert_\infty,
\]
where the supremum is over all reachable beliefs. The proof is unchanged:
for fixed stage strategies, replacing $V$ by $W$ changes the expected
continuation value by at most $\gamma\lVert V-W\rVert_\infty$, and taking
the maximum and minimum preserves the inequality. Moreover, every feasible
strategy of player~$1$ after $h_1$ consists of a feasible distribution in
$\Delta(L_1(h_1))$ and feasible continuation strategies after each
$(a_1,o)$, and every such choice defines a feasible behavioral strategy.
The strategy-decomposition argument of \citet{horak2023solving} therefore
gives $HV^*=V^*$. Hence $H$ has a unique fixed point, which is $V^*$.

\paragraph{Lower-bound stage program.}
The lower bound is represented by sets of $\alpha$-vectors, one for each
distinct set $S(h_1)$. Consider a point update at
$b\in\Delta(S(h_1))$. For each $(a_1,o)$, write
$h_1'=(h_1,a_1,o)$ for the successor information history, and let
$\Gamma_{a_1,o}=\{\alpha_1^{a_1,o},\ldots,
\alpha_{k_{a_1,o}}^{a_1,o}\}$ be the current lower-bound vectors on that
successor set $S(h_1')$. The lower-bound program of
\citet{horak2023solving} becomes
\begin{align*}
\max_{\pi_1,\widehat\lambda,\widehat\alpha,V_s}\quad
&\sum_{s\in S(h_1)}b(s)V_s\\
\text{s.t.}\quad
V_s\leq{}&
\sum_{a_1\in L_1(h_1)}\pi_1(a_1)R(s,a_1,a_2)\\
&+\gamma\sum_{a_1\in L_1(h_1)}\sum_{o\in O}
\sum_{s'\in S(h_1')}
T(o,s'\mid s,a_1,a_2)\widehat\alpha_{a_1,o}(s')
&&\substack{\forall s\in S(h_1),\\
a_2\in L_2(s),}\\
\widehat\alpha_{a_1,o}(s')={}&
\sum_{i=1}^{k_{a_1,o}}
\widehat\lambda_i^{a_1,o}\alpha_i^{a_1,o}(s')
&&\substack{\forall a_1\in L_1(h_1),\ o\in O,\\
s'\in S(h_1'),}\\
\sum_{i=1}^{k_{a_1,o}}\widehat\lambda_i^{a_1,o}={}&
\pi_1(a_1)
&&\substack{\forall a_1\in L_1(h_1),\\o\in O,}\\
\sum_{a_1\in L_1(h_1)}\pi_1(a_1)={}&1,
\qquad \pi_1\geq0,
\qquad\widehat\lambda\geq0.
\end{align*}
For every $a_1$ with $\pi_1(a_1)>0$, set
\[
\alpha_{a_1,o}
=\frac{\widehat\alpha_{a_1,o}}{\pi_1(a_1)}
\in\operatorname{conv}(\Gamma_{a_1,o}).
\]
When $\pi_1(a_1)=0$, choose $\alpha_{a_1,o}$ arbitrarily from
$\operatorname{conv}(\Gamma_{a_1,o})$. The vector added to the lower bound
is the one-step vector obtained from $\pi_1$ and these continuation vectors.
Equivalently, it is defined on every $s\in S(h_1)$ by the minimum of the
right-hand side of the first constraint over $a_2\in L_2(s)$. This fixes
every coordinate of the new vector, including states to which $b$ assigns
zero probability, and its value at $b$ equals the optimal objective of the
program.

The only changes are the restrictions $a_1\in L_1(h_1)$,
$a_2\in L_2(s)$, and the use of the lower-bound vectors for the successor
belief. The vector
$\widehat\alpha_{a_1,o}$ is already multiplied by the probability
$\pi_1(a_1)$, since its mixture weights sum to $\pi_1(a_1)$; it is
therefore not multiplied by $\pi_1(a_1)$ again. The dual is the dual of
\citet{horak2023solving} with one realization weight
$b(s)\pi_2(a_2\mid s)$ for every $a_2\in L_2(s)$. Both programs are finite,
feasible, and bounded, so strong duality holds.

\paragraph{Bounds and point updates.}
For each distinct $S(h_1)$, initialize the lower bound with the constant
vector $L_0$ and the upper bound with the points $(e_s,U_0)$,
$s\in S(h_1)$. The first is a valid lower bound because no feasible strategy
can receive less than $L_0$. The lower $\delta$-Lipschitz envelope of the
upper points is the constant $U_0$ and is a valid upper bound.

Every vector produced by the lower-bound program combines a feasible local
strategy of player~$1$ with continuation vectors that already lower-bound
the values of feasible continuation strategies. Following those strategies
after the corresponding $(a_1,o)$ therefore gives a feasible behavioral
strategy whose value is at least the new vector. Convex combinations of
stored vectors remain valid because player~$1$ can randomize privately among
the corresponding strategies. It follows by induction that the lower bound
remains valid after every point update.
Moreover, every coordinate of every lower-bound vector remains in
$[L_0,U_0]$. This is true initially, and the one-step construction preserves
these bounds because $R_{\min}+\gamma L_0=L_0$ and
$R_{\max}+\gamma U_0=U_0$. Each vector therefore defines a
$\delta$-Lipschitz linear function, and their pointwise maximum is also
$\delta$-Lipschitz.

For the upper bound, add the point $(b,[HV_{UB}](b))$ and take the same lower
$\delta$-Lipschitz envelope as \citet{horak2023solving}, using only points
on the current $\Delta(S(h_1))$. If the stored points upper-bound $V^*$,
then monotonicity of $H$ gives
\[
[HV_{UB}](b)\geq[HV^*](b)=V^*(b),
\]
so the new point is also valid. Convexity and the $\delta$-Lipschitz
property then show that the updated envelope upper-bounds $V^*$ throughout
that simplex. No interpolation is performed between sets with different
feasible actions. Thus, throughout the algorithm,
\[
V_{LB}(b)\leq V^*(b)\leq V_{UB}(b).
\]

The same updates also preserve the two inequalities used by the strategy
construction of \citet{horak2023solving}. For the initial lower vector, any
feasible $\pi_1$ together with the same constant continuation gives a
one-step value of at least $L_0$. Every subsequently added vector is the
one-step vector obtained from the feasible $\pi_1$ returned by the
lower-bound program and continuation vectors in the corresponding sets
$\operatorname{conv}(\Gamma_{a_1,o})$. The same property holds for convex
combinations. Indeed, if $\alpha=\sum_j\lambda_j\alpha^j$, use
$\pi_1=\sum_j\lambda_j\pi_1^j$ and, after an action $a_1$ with
$\pi_1(a_1)>0$, mix the corresponding continuation vectors with weights
$\lambda_j\pi_1^j(a_1)/\pi_1(a_1)$. This gives a feasible one-step vector
that dominates $\alpha$ coordinatewise. For the upper bound,
\[
[HV_{UB}](b)\leq V_{UB}(b)
\]
holds initially because
$HU_0\leq R_{\max}+\gamma U_0=U_0$, and it continues to hold after every
update. To see the latter, let $V'_{UB}$ be
the envelope after adding a point. Since $V'_{UB}\leq V_{UB}$, every old
point $(b_i,y_i)$ satisfies
\[
y_i\geq V_{UB}(b_i)
\geq[HV_{UB}](b_i)
\geq[HV'_{UB}](b_i),
\]
and the new point satisfies
\[
[HV_{UB}](b)\geq[HV'_{UB}](b).
\]
The convexity and $\delta$-Lipschitz continuity of $HV'_{UB}$ then imply
that its lower
$\delta$-Lipschitz envelope, $V'_{UB}$, also lies above $HV'_{UB}$. This is
the same argument as in \citet{horak2023solving}, applied separately on
each $\Delta(S(h_1))$.

\paragraph{Termination.}
If $\delta=0$, the initial bounds coincide. Suppose $\delta>0$ and define
\[
\rho(0)=\epsilon,
\qquad
\rho(t+1)=\frac{\rho(t)-2\delta D}{\gamma},
\qquad
\operatorname{excess}(b,t)
=V_{UB}(b)-V_{LB}(b)-\rho(t).
\]
Let $\pi_1^{UB}$ be optimal in the upper-bound stage game and
$\pi_2^{LB}$ be optimal in the lower-bound stage game. The saddle-point
inequalities give
\begin{align*}
[HV_{UB}](b)-[HV_{LB}](b)
\leq{}&\gamma\sum_{a_1,o}
\Pr_{b,\pi_1^{UB},\pi_2^{LB}}[a_1,o]\\
&\quad\times
\bigl(V_{UB}-V_{LB}\bigr)
\bigl(\tau(b,a_1,\pi_2^{LB},o)\bigr).
\end{align*}
Consequently, the forward rule of \citet{horak2023solving} is unchanged: it
selects the branch maximizing its probability multiplied by the successor
excess, using the action-conditioned posterior in
\eqref{eq:constrained-posterior}.

If the selected quantity is nonpositive, every positive-probability
successor has gap at most $\rho(t+1)$. After the point update, the preceding
inequality gives
\[
V_{UB}(b)-V_{LB}(b)
\leq\gamma\rho(t+1)
=\rho(t)-2\delta D.
\]
Both bounds are $\delta$-Lipschitz on the current simplex, so their
difference is $2\delta$-Lipschitz. Hence every belief in that simplex within
distance $D$ of $b$ has nonpositive excess at depth $t$. If
\[
0<D<\frac{(1-\gamma)\epsilon}{2\delta},
\]
then $\rho(t)$ eventually exceeds the largest possible bound gap, so the
recursion depth is finite. Fix a depth $t$ and a structural state set
$S(h_1)$, and consider the last belief reached by each trial that terminates
there. After the point update at such a belief $b$, the excess is
nonpositive throughout the $D$-neighborhood of $b$ in
$\Delta(S(h_1))$. Later updates can only reduce the gap, so a later trial
can terminate at that depth and state set only at a belief more than $D$
from $b$. These terminal beliefs are therefore $D$-separated. Compactness
of each $\Delta(S(h_1))$, together with the finite recursion depth and the
finitely many distinct subsets $S(h_1)\subseteq S$, implies that only
finitely many trials occur. Hence the algorithm terminates with
\[
V_{UB}(b^0)-V_{LB}(b^0)\leq\epsilon.
\]

\paragraph{Strategy extraction.}
The strategy construction of \citet{horak2023solving} uses exactly the two
inequalities established above. For player~$1$, begin with a lower-bound
vector $\alpha$ attaining $V_{LB}(b^0)$. At each information history, play
the feasible distribution associated with $\alpha$ and, after observing
$(a_1,o)$, replace $\alpha$ by the associated continuation vector in
$\operatorname{conv}(\Gamma_{a_1,o})$. This construction depends only on
player~$1$'s action-observation history and the retained continuation vector;
it does not require player~$1$ to know the current belief. If it is followed
for $K$ stages and an arbitrary feasible strategy is used thereafter, the
finite-horizon induction of \citet{horak2023solving} gives a value of at
least
\[
V_{LB}(b^0)-\gamma^K(U_0-L_0).
\]
Letting $K\to\infty$ gives a feasible strategy $\sigma_1^{LB}$ such that
\[
\inf_{\sigma_2}V(b^0;\sigma_1^{LB},\sigma_2)
\geq V_{LB}(b^0).
\]
For player~$2$, choose at each stage the minimizing strategy in
$[HV_{UB}](b)$. Player~$2$ observes the current state and knows the stage
strategies it has used, so it can update $b$ by
\eqref{eq:constrained-posterior}. The inequality
$HV_{UB}\leq V_{UB}$ gives, after $K$ stages and an arbitrary feasible
continuation, an upper bound of
$V_{UB}(b^0)+\gamma^K(U_0-L_0)$. Letting $K\to\infty$ gives a feasible
strategy $\sigma_2^{UB}$ such that
\[
\sup_{\sigma_1}V(b^0;\sigma_1,\sigma_2^{UB})
\leq V_{UB}(b^0).
\]
Since the two bounds differ by at most $\epsilon$, neither player can gain
more than $\epsilon$ by deviating. Thus
$(\sigma_1^{LB},\sigma_2^{UB})$ is an $\epsilon$-Nash equilibrium,
equivalently an $\epsilon$-saddle point.

Finally, the game $\widehat G$ constructed in
\cref{thm:strategic-equivalence} satisfies the required condition. In an
original state, player~$1$ has the common feasible action set
$\{\bot_1\}$; in a dummy state, it has the common feasible action set
$A_1$. Player~$1$ observes which phase is being played, while player~$2$
observes the transformed state and uses $B(s)$ or $\{\bot_2\}$ as
appropriate. The result therefore applies to $\widehat G$.

\end{proofE}

The termination condition deteriorates as $\gamma\to1$. Under the
uniform reward bounds used here,
\[
\delta=\frac{\widehat R_{\max}-\widehat R_{\min}}{2(1-\sqrt{\gamma})},
\]
so the sufficient condition may be written as
\[
D<\frac{(1-\sqrt{\gamma})^2\epsilon}
{\widehat R_{\max}-\widehat R_{\min}}.
\]
This is a worst-case sufficient bound and becomes conservative for
discount factors close to one; we discuss its practical implications and our finite-horizon use of $\gamma=0.999$ when we discuss the limitations of our algorithm in \cref{subs:LimsofAlg}.

\subsection{Algorithm}
Using our representation of a history-dependent SA-MDP as a constrained zero-sum OS-POSG, we devise an algorithm to solve our game. The algorithm first converts a given SA-MDP game (as encoded in a custom text file format) into an equivalent constrained OS-POSG with phase-dependent singleton feasible action sets and adversarial proximity constraints. Given this constrained OS-POSG, we then leverage \citet{horak2023solving} with HSVI, modified to respect state-dependent feasibility constraints for both players. Further details of the algorithm are provided in \cref{app:algorithm}.

\section{Experiments}
\label{sec:experiments}
\subsection{Example SA-MDP Where History Matters}
\label{subs:example_hist_matters}
We construct an example history-dependent SA-MDP with analytically verifiable optimal strategies to explicitly find how much worse off a player facing a history-dependent adversary is and to provide a benchmark setting to test our algorithm.

The game consists of states $S = \{s_{0}, s_1, s_2, s_3\}$, actions $A_1=\{A,B,C\}$, and a time horizon $H=2.$ Adversaries can misreport only in restricted ways, where the restriction on the support of misreporting strategies is given by $B(s) \in 2^{S}\setminus \emptyset$ for state $s \in S$. For our problem $B(s_1) = \{s_1, s_2\}$, $B(s_2)=\{s_2,s_3\}$ and $B(s_3) = \{s_3, s_1\}$. Rewards are sparse and are only realized at the end of the game. The full game is shown in \cref{fig:tree}.
\begin{figure}[ht]
  \centering
  \begin{tikzpicture}[
    >=Stealth,
    scale=0.5,
    node distance=10mm,
    every node/.style={font=\small},
    agent/.style={circle,draw,thick,minimum size=4mm},
    chance/.style={diamond,draw,thick,aspect=1.6,inner sep=1pt},
    adv/.style={rectangle,draw,thick,rounded corners=2pt,minimum height=4mm,inner sep=2pt},
    terminal/.style={rectangle,minimum width=4mm,minimum height=3mm,inner sep=2pt}
  ]

    \node[agent] (root) {$s_{0}$};

    \node[chance, below left=9mm and 10mm of root] (cA) {};
    \node[above left=0mm of cA] {Nature};
    \draw[->] (root) -- node[above left] {$A$} (cA);

    \node[chance, below right=9mm and 10mm of root] (cB) {};
    \node[above right=0mm of cB] {Nature};
    \draw[->] (root) -- node[above right] {$B$} (cB);

    \node[adv, below=18mm of root, xshift=-26mm] (a3) {Adv $s_3$};
    \node[adv, below=18mm of root]              (a2) {Adv $s_2$};
    \node[adv, below=18mm of root, xshift=26mm] (a1) {Adv $s_1$};

    \draw[->] (cA) -- node[below left,pos=0.3] {$s_2,\; \tfrac12$} (a2);
    \draw[->] (cA) -- node[above left,pos=0.95]  {$s_3,\; \tfrac12$} (a3);

    \draw[->] (cB) -- node[above right,pos=0.95] {$s_1,\; \tfrac12$} (a1);
    \draw[->] (cB) -- node[below right,pos=0.3]  {$s_2,\; \tfrac12$} (a2);

    \node[agent, below=11mm of a3] (g3) {};
    \draw[->] (a3) -- node[left] {$\tilde s \in \{s_1,s_3\}$} (g3);

    \node[agent, below=11mm of a2] (g2) {};
    \draw[->] (a2) -- node[left] {$\tilde s \in \{s_2,s_3\}$} (g2);

    \node[agent, below=11mm of a1] (g1) {};
    \draw[->] (a1) -- node[right] {$\tilde s \in \{s_1,s_2\}$} (g1);

    \node[terminal, below left=13mm and 5mm of g1]  (t1A) {$0$};
    \node[terminal, below=13mm of g1]              (t1B) {$-2$};
    \node[terminal, below right=13mm and 5mm of g1] (t1C) {$0$};

    \draw[->] (g1) -- node[left]  {$A$} (t1A);
    \draw[->] (g1) -- node[right] {$B$} (t1B);
    \draw[->] (g1) -- node[right] {$C$} (t1C);

    \node[terminal, below left=13mm and 5mm of g2]  (t2A) {$-1$};
    \node[terminal, below=13mm of g2]              (t2B) {$2$};
    \node[terminal, below right=13mm and 5mm of g2] (t2C) {$-1$};

    \draw[->] (g2) -- node[left]  {$A$} (t2A);
    \draw[->] (g2) -- node[right] {$B$} (t2B);
    \draw[->] (g2) -- node[right] {$C$} (t2C);

    \node[terminal, below left=13mm and 5mm of g3]  (t3A) {$-2$};
    \node[terminal, below=13mm of g3]              (t3B) {$-2$};
    \node[terminal, below right=13mm and 5mm of g3] (t3C) {$0$};

    \draw[->] (g3) -- node[left]  {$A$} (t3A);
    \draw[->] (g3) -- node[right] {$B$} (t3B);
    \draw[->] (g3) -- node[right] {$C$} (t3C);

  \end{tikzpicture}
  \caption{Finite horizon version of SA-MDP game tree (action $C$ at $s_0$ yields $-10$ and so omitted as never chosen): the agent chooses $A$ or $B$ at
  $s_0$, Nature draws the second-stage state, the adversary reports a neighbor state $\tilde s\in B(s)$,
  and the agent chooses $A,B,$ or $C$ for the terminal payoff.}
  \label{fig:tree}
  \vspace{-5mm}
\end{figure}

An analysis of the game in \cref{app:example-samdp} demonstrates that the value associated with Markovian strategies for both the agent and the adversary $V_{\textrm{Markov}} = 0.25$, while for a history-dependent agent and adversary, we get $V_{\textrm{Hist}} = 0$. This means that the agent is strictly worse off facing a history-dependent adversary, even after being allowed history-dependent strategies. 

There are multiple equilibrium strategies for each player. Our code successfully recovers one of these equilibrium strategies that defines the boundary of the region containing optimal strategies under a barrier-point linear program solver for each stage game. Full recovered strategies are shown visually in \cref{fig:adv_markov_vs_hist_mapping} and written out explicitly in \cref{app:recovered-samdp}.

\tikzset{
  ballM/.style={
    circle, draw=black, thick, fill=NavyBlue!40,
    text=white, minimum size=5.2mm, inner sep=0pt, font=\scriptsize
  },
  ballH/.style={
    circle, draw=black, thick, fill=NavyBlue!40,
    text=white, minimum size=4.6mm, inner sep=0pt, font=\tiny
  },
  mapArrow/.style={->, thick, >=Latex, draw=black},
  prob/.style={font=\tiny, inner sep=0.4pt, text=black}
}

\newcommand{\MapOne}[2]{\draw[mapArrow] (#1) -- (#2);} 
\newcommand{\MapSplit}[5]{%
  \draw[mapArrow] (#1) -- node[pos=0.62, prob, above, yshift=1.2pt] {#3} (#2);
  \draw[mapArrow] (#1) -- node[pos=0.62, prob, above, yshift=1.2pt] {#5} (#4);
}

\begin{figure}[t]
\centering

\begin{minipage}[t]{0.48\columnwidth}
\vspace{0pt}\centering
\resizebox{0.6\linewidth}{!}{%
\begin{tikzpicture}
  \coordinate (C) at (0,0);

\node[font=\small\bfseries] at ($(C)+(0,10mm)$)
  {Markovian $\pi_2(\tilde s\mid s)$};

  \node[ballM] at ($(C)+(-18mm,0mm)$) (Ls1) {$s_1$};
  \node[ballM, below=4.6mm of Ls1]    (Ls2) {$s_2$};
  \node[ballM, below=4.6mm of Ls2]    (Ls3) {$s_3$};

  \node[ballM] at ($(C)+( 18mm,-4.6mm)$) (Rs2) {$s_2$};
  \node[ballM, below=4.6mm of Rs2]       (Rs3) {$s_3$};

  \MapOne{Ls1}{Rs2}
  \MapSplit{Ls2}{Rs2}{$\tfrac12$}{Rs3}{$\tfrac12$}
  \MapOne{Ls3}{Rs3}
\end{tikzpicture}%
}
\end{minipage}
\hfill
\begin{minipage}[t]{0.48\columnwidth}
\vspace{0pt}\centering
\resizebox{0.6\linewidth}{!}{%
\begin{tikzpicture}
  \coordinate (C) at (0,0);

\node[font=\small\bfseries] at ($(C)+(0,10mm)$)
  {History-dependent $\pi_2(\tilde s\mid s,a)$};

  \coordinate (CA) at ($(C)+(0,0mm)$);
  \node[font=\scriptsize] at ($(CA)+(0,6.8mm)$) {after $A$};

  \node[ballH] at ($(CA)+(-15mm,0mm)$) (ALs2) {$s_2$};
  \node[ballH, below=3.6mm of ALs2]    (ALs3) {$s_3$};

  \node[ballH] at ($(CA)+( 15mm,0mm)$) (ARs2) {$s_2$};
  \node[ballH, below=3.6mm of ARs2]    (ARs3) {$s_3$};

  \MapSplit{ALs2}{ARs2}{$\tfrac13$}{ARs3}{$\tfrac23$}
  \MapOne{ALs3}{ARs3}

\coordinate (CB) at ($(C |- ALs3.south) + (0,-9mm)$);

\node[font=\scriptsize] at ($(CB)+(0,7.2mm)$) {after $B$};

\node[ballH] at ($(CB)+(-15mm,0mm)$) (BLs1) {$s_1$};
\node[ballH, below=3.2mm of BLs1]    (BLs2) {$s_2$};

\node[ballH] at ($(CB)+( 15mm,-2.1mm)$) (BRs2) {$s_2$};

  \MapOne{BLs1}{BRs2}
  \MapOne{BLs2}{BRs2}
\end{tikzpicture}%
}
\end{minipage}

\caption{Adversary equilibrium strategies for Markovian vs.\ history-dependent adversary. Left: Markovian. Right: history-dependent kernels conditioned on player action in state $s_0$ ( $A$ vs.\ after $B$).}
\label{fig:adv_markov_vs_hist_mapping}
\vspace{-6mm}
\end{figure}

Our algorithm returns an upper and lower bound that contain the true value for the history-dependent strategies that are less than tolerance $\epsilon$ apart on convergence. The midpoint of these upper and lower bounds for history-dependent strategies is given by $\hat{V}_{\textrm{Hist}} = 0 \pm O(10^{-8})$, compared to a known analytic value of 0, for a convergence tolerance $\epsilon=0.01$. Because we use $\gamma \sim 1$ ($\gamma=1$ breaks HSVI), even with exact precision we would have small numerical error left over for an arbitrary problem. In practice, over moderate horizons, the difference is negligible. In comparison, our Markovian game solver finds a solution with a numerical tolerance of $O(10^{-6})$, based on the numerical precision of the linear programming solver. The final solution found via the linear solver is $\hat{V}_{\textrm{Markov}} = 0.2496 \pm O(10^{-6})$, compared to a true analytical value for $\gamma=0.999\sim 1$ of 0.2496 and $0.25$ for the exact $\gamma=1$. 
\subsection{Larger Problems}
\label{subs:largerProblems}
We test our algorithm up to depth 10 with 4 actions per node on randomly generated trees with 3.1 million states and 1.4 million information sets. Solving takes 2200s on a single MacBook M4 Pro CPU core and involves 18,000 explorations of the tree. We also tested our algorithm with 5 different seeds for games of sizes 2, 4, 6, and 8. We get a roughly log-linear plot in the amount of solve time per additional layer of depth. Roughly every 2 additional horizons increase solve time by 20--30$\times$. See \cref{fig:scaling_performance} and \cref{tab:depth_scaling} for detailed scaling information and \cref{app:algorithm} for further discussion.

\begin{table}[!htbp]
\centering
\caption{Runtime and exploration scaling by depth over 5 seeds. Explorations are the number required to reach an excess value-function gap below $10^{-2}$. Times are rounded to hundredths of a second.}
\label{tab:depth_scaling}
\begin{tabular}{@{}r r S[table-format=2.2] c r@{}}
\toprule
Depth & States & {Mean time (s)} & {Std. dev. (s)} & Explorations \\
\midrule
2 & 23     & 0.00  & 0.00  & 6 \\
4 & 743    & 0.08  & 0.02 & 28 \\
6 & 12,263 & 1.48  & 0.30 & 257 \\
8 & 196,583 & 52.49 & 21.49 & 2,564   \\
\bottomrule
\end{tabular}
\end{table}
Finally, we practically test our algorithm on Atari Freeway at depth 12 with a frame-skip of 30. Here, following the procedure in BRIDGE \citep{laidlaw2023} we generate an exact enumeration of state-actions reachable within the given horizon and merge behaviorally equivalent states. We found that on the default Atari settings with frameskip 30, agents were able to find an optimal deterministic strategy independent of actual observations on the initial reset, and thus immune to adversary perturbations. Therefore, we made the Atari initial state ``random'' by  hidden warm-up uncertainty and an initial adversary maximal perturbation of the radius in L2 distance as 500 (about 2.73 grayscale levels per pixel if spread uniformly), which permits multiple reports. The two initial states were generated by ``warming-up'' the Freeway environment and letting it run initially for a random length of time. We then searched across these possible random initial states until we found two whose distance was less than 500 apart from each other (their initial screen distance is 486.7018). This ensured that these two initial states are confusable. We then placed an equal probability of the agent starting in either of the two initial states in the Atari simulation.  Here, $V_{\textrm{HD},\textrm{HD}}$ lies below the no-adversary value and exceeds the approximate Markov FP value, showing history-dependent strategies matter and help the agent here. Atari Freeway is depth 12 but takes just 2.07s to solve for optimal history-dependent adversary and agent policies. This is because the number of info partitions in the OS-POSG is small after this behavioral consolidation (5,242 partitions versus 87,375 for depth 8, action 4 game).

In this Atari Freeway setting, history matters, and we find a history-dependent value of 0.8167225 (0.816584--0.816861 lower and upper value bounds from HSVI after converging under 0.01 excess gap; a reported excess gap of $\sim 2\times 10^{-4}$) for both the agent and the adversary exhibiting history-dependence, compared to 0.856748 for no adversary and 0.758971 for both agents being Markovian. Here, history dependence helps the agent detect manipulations by the adversary, leading to improved performance. Theoretically, however, the direction could go either way.

Larger settings are feasible to scale to, but are limited by the use of HSVI as our algorithm for exact solving; the primary constraint is memory as our depth 10 setting uses peak RAM of 8.6GB.

\subsection{Limitations of Algorithm}
\label{subs:LimsofAlg}
The primary limitation of our algorithm is that, for general games, the solver wall time scales in a roughly exponential fashion in the maximum length of history-dependence we solve for in the original SA-MDP problem. This is unsurprising, given that the number of histories themselves scale in this fashion and we are aiming to exactly solve the solution. 
In principle, however, any solution algorithm for general zero-sum OS-POSG games can now be applied. This means that approximate solutions involving depth-limited search, public belief-space methods, flexible neural network-based value function or policy function estimators learned via self-play could be applied to our algorithm, at the cost of our strong guarantees on convergence that we get from utilizing HSVI in exchange for even further scalability. Already, however, we have shown we can tackle SA-MDP problems of real-world relevance and size, while still retaining the use of HSVI with guarantees on the validity of the solution.
A further limitation arises from using a discount factor close to one to represent finite-horizon objectives within the infinite-horizon HSVI framework. In our finite-horizon experiments, we use $\gamma=0.999$ and
absorbing zero-reward states after the terminal depth. This introduces a small difference between the intended undiscounted finite-horizon return and the discounted return optimized by the solver. Thus, although $\gamma=0.999$ provides a close approximation to the undiscounted short-horizon objective, the corresponding worst-case termination bound can be highly conservative. Our experiments indicate that practical convergence is substantially better than this bound, but the approach remains most suitable for short- to moderate-horizon
problems.

 We leave it to future work to explore the application of these methods to solve larger SA-MDPs with our approach. 

\section{Conclusions}
We develop theory for history-dependent strategies in SA-MDPs, as well as an algorithm for computing their solutions using a novel reduction combined with HSVI. We show how this can be extended to study constrained adversaries. Finally, we verify in numerical examples that history-dependence quantitatively and qualitatively matters for optimal policies in SA-MDPs and how much an adversary can punish an agent and show that our algorithm successfully recovers Nash equilibrium strategies. This opens the door to scaling our approach to substantially larger state spaces by using compressed belief representations, using approximate solvers for POSG in a public belief space representation of the setting, and horizon truncation at the cost of guarantees and uniform bounds for HSVI.
\section*{Impact Statement} \label{sec:impact_statement}

This work advances the theoretical and algorithmic foundations of robust reinforcement learning under adversarial observation manipulation by formalizing and computing history-dependent equilibria in state-adversarial MDPs. By showing when history dependence fundamentally alters equilibrium behavior - and providing the first practical method to compute such equilibria - our results improve understanding of decision-making in safety-critical and adversarial environments. 

Potential positive impacts include more reliable autonomous systems and stronger robustness guarantees in sequential decision problems. As with many advances in adversarial modeling, these techniques could also be misused to design more effective attacks; however, we believe that making such vulnerabilities explicit is necessary for developing defenses and improving system resilience overall. Finally, the real-world applicability of our current approach is constrained by its bounded scalability.
\bibliographystyle{plainnat}
\bibliography{dissertation_bib}

@book{munkres2000topology,
  title     = {Topology},
  author    = {Munkres, James R.},
  year      = {2000},
  edition   = {2nd},
  publisher = {Prentice Hall, Inc.},
  address   = {Upper Saddle River, NJ},
  isbn      = {978-0131816299}
}

@incollection{kuhn1953extensive,
  author    = {Kuhn, Harold W.},
  title     = {Extensive Games and the Problem of Information},
  booktitle = {Contributions to the Theory of Games II},
  editor    = {Kuhn, Harold W. and Tucker, Albert W.},
  series    = {Annals of Mathematics Studies},
  volume    = {28},
  pages     = {193--216},
  publisher = {Princeton University Press},
  address   = {Princeton, NJ},
  year      = {1953}
}

@inproceedings{laidlaw2023,
  author    = {Laidlaw, Cassidy and Russell, Stuart J and Dragan, Anca},
  booktitle = {Advances in Neural Information Processing Systems},
  editor    = {A. Oh and T. Naumann and A. Globerson and K. Saenko and M. Hardt and S. Levine},
  pages     = {58953-59007},
  publisher = {Curran Associates, Inc.},
  title     = {Bridging RL Theory and Practice with the Effective Horizon},
  volume    = {36},
  year      = {2023}
}

@article{zhang2020robust,
  title={Robust deep reinforcement learning against adversarial perturbations on state observations},
  author={Zhang, Huan and Chen, Hongge and Xiao, Chaowei and Li, Bo and Liu, Mingyan and Boning, Duane and Hsieh, Cho-Jui},
  journal={Advances in Neural Information Processing Systems},
  volume={33},
  pages={21024--21037},
  year={2020}
}

@inproceedings{zhang2021robust,
  author    = {Huan Zhang and Hongge Chen and Duane S. Boning and Cho{-}Jui Hsieh},
  title     = {Robust Reinforcement Learning on State Observations with Learned Optimal Adversary},
  booktitle = {International Conference on Learning Representations, {ICLR}},
  year      = {2021},
  bibsource = {dblp computer science bibliography, https://dblp.org},

  note={arXiv:2101.08452},

}

@inproceedings{sun2021strongest,
  author       = {Yifan Wang and
                  Lukas Rahmann and
                  Olga Sorkine{-}Hornung},
  title        = {Geometry-Consistent Neural Shape Representation with Implicit Displacement
                  Fields},
  booktitle    = {International Conference on Learning Representations, {ICLR}},
  year         = {2022},
   note      = {arXiv:2106.05187}
  }

@article{BirmpasGHCRV21,
  author       = {Georgios Birmpas and
                  Jiarui Gan and
                  Alexandros Hollender and
                  Francisco J. Marmolejo Coss{\'{\i}}o and
                  Ninad Rajgopal and
                  Alexandros A. Voudouris},
  title        = {Optimally Deceiving a Learning Leader in Stackelberg Games},
  journal      = {J. Artif. Intell. Res.},
  volume       = {72},
  pages        = {507--531},
  year         = {2021}
}

@article{han2022solution,
  title     = {What is the Solution for State-Adversarial Multi-Agent Reinforcement Learning?},
  author    = {Songyang Han and Sanbao Su and Sihong He and Shuo Han and Haizhao Yang and Fei Miao},
  journal   = {Trans. Mach. Learn. Res.},
  year      = {2022},
  note = {arXiv:2212.02705},
  bibSource = {Semantic Scholar https://www.semanticscholar.org/paper/cfd2c668504c0a97e73fe6e40fe7fc869aa5a40a}
}

@inproceedings{franzmeyer2023illusory,
  title={Illusory attacks: Detectability matters in adversarial attacks on sequential decision-makers},
  author={Franzmeyer, Tim and McAleer, Stephen Marcus and Henriques, Joao F and Foerster, Jakob Nicolaus and Torr, Philip and Bibi, Adel and de Witt, Christian Schroeder},
  booktitle={The Second Workshop on New Frontiers in Adversarial Machine Learning},
  year={2023}
}

@article{horak2023solving,
  title={Solving zero-sum one-sided partially observable stochastic games},
  author={Hor{\'a}k, Karel and Bo{\v{s}}ansk{\`y}, Branislav and Kova{\v{r}}{\'\i}k, Vojt{\v{e}}ch and Kiekintveld, Christopher},
  journal={Artificial Intelligence},
  volume={316},
  pages={103838},
  year={2023},
  publisher={Elsevier}
}

@inproceedings{huang2017adversarial,
  title     = {Adversarial Attacks on Neural Network Policies},
  author    = {Sandy H. Huang and Nicolas Papernot and I. Goodfellow and Yan Duan and P. Abbeel},
  booktitle   = {International Conference on Learning Representations, {ICLR}},
  year      = {2017},
  note={arXiv:1702.02284}
}

@inproceedings{lin2017tactics,
author = {Lin, Yen-Chen and Hong, Zhang-Wei and Liao, Yuan-Hong and Shih, Meng-Li and Liu, Ming-Yu and Sun, Min},
title = {Tactics of adversarial attack on deep reinforcement learning agents},
year = {2017},
isbn = {9780999241103},
publisher = {AAAI Press},
booktitle = {Proceedings of the 26th International Joint Conference on Artificial Intelligence},
pages = {3756–3762},
numpages = {7},
location = {Melbourne, Australia},
series = {IJCAI'17}
}

@inproceedings{behzadan2017vulnerability,
  title={Vulnerability of deep reinforcement learning to policy induction attacks},
  author={Behzadan, Vahid and Munir, Arslan},
  booktitle={International conference on machine learning and data mining in pattern recognition},
  pages={262--275},
  year={2017},
  organization={Springer}
}

@article{kaelbling1998planning,
  title={Planning and acting in partially observable stochastic domains},
  author={Kaelbling, Leslie Pack and Littman, Michael L and Cassandra, Anthony R},
  journal={Artificial intelligence},
  volume={101},
  number={1-2},
  pages={99--134},
  year={1998},
  publisher={Elsevier}
}

@inproceedings{smith2012heuristic,
author = {Smith, Trey and Simmons, Reid},
title = {Heuristic search value iteration for POMDPs},
year = {2004},
isbn = {0974903906},
publisher = {AUAI Press},
address = {Arlington, Virginia, USA},
booktitle = {Proceedings of the 20th Conference on Uncertainty in Artificial Intelligence},
pages = {520–527},
numpages = {8},
location = {Banff, Canada},
series = {UAI '04}
}

@booklet{cplex-manual,
  author  = {{IBM}},
  title   = {User's Manual for CPLEX},
  version = {22.1.2},
  url     = {https://www.ibm.com/docs/en/icos/22.1.2?topic=optimizers-users-manual-cplex},
  urldate = {2026-01-27},
  year={2024},
  note    = {IBM ILOG CPLEX Optimization Studio documentation},
}

@article{oikarinen2021robust,
  title={Robust deep reinforcement learning through adversarial loss},
  author={Oikarinen, Tuomas and Zhang, Wang and Megretski, Alexandre and Daniel, Luca and Weng, Tsui-Wei},
  journal={Advances in Neural Information Processing Systems},
  volume={34},
  pages={26156--26167},
  year={2021}
}

@inproceedings{liang2022efficient,
  title     = {Efficient Adversarial Training without Attacking: Worst-Case-Aware Robust Reinforcement Learning},
  author    = {Liang, Yongyuan and Sun, Yanchao and Zheng, Ruijie and Huang, Furong},
  booktitle = {Advances in Neural Information Processing Systems},
  volume    = {35},
  year      = {2022}
}

@inproceedings{mcmahan2023optimal,
  title     = {Optimal Attack and Defense for Reinforcement Learning},
  author    = {McMahan, Jeremy and Wu, Young and Zhu, Xiaojin and Xie, Qiaomin},
  booktitle = {Proceedings of the AAAI Conference on Artificial Intelligence},
  year      = {2024},
  note      = {arXiv:2312.00198}
}

@inproceedings{liang2023game,
  title     = {Game-Theoretic Robust Reinforcement Learning Handles Temporally-Coupled Perturbations},
  author    = {Liang, Yongyuan and Sun, Yanchao and Zheng, Ruijie and Liu, Xiangyu and Eysenbach, Benjamin and Sandholm, Tuomas and Huang, Furong and McAleer, Stephen},
  booktitle = {International Conference on Learning Representations (ICLR)},
  year      = {2024},
  note      = {arXiv:2307.12062}
}

@article{sarkadi2019modelling,
  title={Modelling deception using theory of mind in multi-agent systems},
  author={Sarkadi, {\c{S}}tefan and Panisson, Alison R and Bordini, Rafael H and McBurney, Peter and Parsons, Simon and Chapman, Martin},
  journal={AI Communications},
  volume={32},
  number={4},
  pages={287--302},
  year={2019},
  publisher={SAGE Publications Sage UK: London, England}
}

@article{motwani2024secret,
  title={Secret collusion among ai agents: Multi-agent deception via steganography},
  author={Motwani, Sumeet R and Baranchuk, Mikhail and Strohmeier, Martin and Bolina, Vijay and Torr, Philip H and Hammond, Lewis and de Witt, Christian S},
  journal={Advances in Neural Information Processing Systems},
  volume={37},
  pages={73439--73486},
  year={2024}
}

@article{wang2022novel,
  title={A novel bipartite consensus tracking control for multiagent systems under sensor deception attacks},
  author={Wang, Xinjun and Cao, Ye and Niu, Ben and Song, Yongduan},
  journal={IEEE Transactions on Cybernetics},
  volume={53},
  number={9},
  pages={5984--5993},
  year={2022},
  publisher={IEEE}
}

@inproceedings{tu2021adversarial,
  title={Adversarial attacks on multi-agent communication},
  author={Tu, James and Wang, Tsunhsuan and Wang, Jingkang and Manivasagam, Sivabalan and Ren, Mengye and Urtasun, Raquel},
  booktitle={Proceedings of the IEEE/CVF International Conference on Computer Vision},
  pages={7768--7777},
  year={2021}
}

@article{pendharkar2012game,
  title={Game theoretical applications for multi-agent systems},
  author={Pendharkar, Parag C},
  journal={Expert Systems with Applications},
  volume={39},
  number={1},
  pages={273--279},
  year={2012},
  publisher={Elsevier}
}

@article{panait2005cooperative,
  title={Cooperative multi-agent learning: The state of the art},
  author={Panait, Liviu and Luke, Sean},
  journal={Autonomous agents and multi-agent systems},
  volume={11},
  number={3},
  pages={387--434},
  year={2005},
  publisher={Springer}
}

@book{wang2017cooperative,
  title={Cooperative control of multi-agent systems: Theory and applications},
  author={Wang, Yue and Garcia, Eloy and Casbeer, David and Zhang, Fumin},
  year={2017},
  publisher={John Wiley \& Sons}
}

@article{cai2022cooperative,
  title={Cooperative control of multi-agent systems},
  author={Cai, He and Su, Youfeng and Huang, Jie},
  journal={Cham: Springer-Verlag},
  year={2022},
  publisher={Springer}
}

@article{wang2022cooperative,
  title={Cooperative and competitive multi-agent systems: From optimization to games},
  author={Wang, Jianrui and Hong, Yitian and Wang, Jiali and Xu, Jiapeng and Tang, Yang and Han, Qing-Long and Kurths, J{\"u}rgen},
  journal={IEEE/CAA Journal of Automatica Sinica},
  volume={9},
  number={5},
  pages={763--783},
  year={2022},
  publisher={IEEE}
}

@InProceedings{khakpour2022partially,
author="Khakpour, Narges
and Parker, David",
editor="Madeira, Alexandre
and Knapp, Alexander",
title="Partially-Observable Security Games for Attack-Defence Analysis in Software Systems",
booktitle="Software Engineering and Formal Methods",
year="2025",
publisher="Springer Nature Switzerland",
address="Cham",
pages="144--161",
isbn="978-3-031-77382-2"
}
\newpage
\appendix
\section{Proofs}
\label{app:deferred-proofs}

\printProofs


\section{Numerically Discovered SA-MDP Strategies}
\label{app:recovered-samdp}
We use the standard game theory notation for mixed strategies $p[A] +(1-p)[B]$ to indicate a mixed strategy that plays action $A$ with probability $p$ and $B$ with probability $1-p$, and $A$ to indicate the pure strategy that plays $A$. We write the adversary's reporting strategies the same way.

\subsection{Markovian Strategies}
\paragraph{Agent} We find Markovian equilibrium strategies,
$\pi_1(s_0) = A,\ \pi_1(s_1) = A,\ \pi_1(s_2)= B,\ \pi_1(s_3)= C$, given observations $s_0, s_1, s_2, s_3$.
\paragraph{Adversary} We find equilibrium strategies 
$\pi_2(\cdot|s_0)=s_0, \pi_2(\cdot |s_1)= s_2, \pi_2(\cdot|s_2)= \tfrac{1}{2}[s_2]+\tfrac{1}{2}[s_3]$, $\pi_2(\cdot|s_3)=s_3$.
\subsection{History Dependent Strategies }
\paragraph{Agent} Let $\pi_1(s_0)$ denote our agent's strategy at state $s_0$ in the first stage, and $\pi_1(a,\tilde{s})$ denote our agent's optimal strategy after playing action $a$ in the first stage and receiving observation $\tilde{s}$. Our agent's strategy is $\pi_1(s_0)=A$, $\pi_1(A, \tilde{s}_i) = B$ for $\tilde{s}_i \in \{s_2, s_3\}$ for on-policies strategies. Off-policy, our agent best responds $\pi_1(B, \tilde{s}_i) = B$ for $\tilde{s}_i \in \{s_2,s_3\}$.
Finally, again off-policy, our agent best responds $\pi_1(C, s_0)=A.$

\paragraph{Adversary} Our adversary best responds as:
At state $s_0$: $\pi_2(\cdot| s_0) = s_0$.
After $A$ (on policy): $\pi_2(\cdot| s_2, A) = \tfrac{1}{3}[s_2]+\tfrac{2}{3}[s_3]$, $\pi_2(\cdot|s_3, A) = s_3$.
After $B$, off-policy: $\pi_2(\cdot|s_i, B) = s_2$, for $s_i \in \{s_1, s_2\}$.
After $C$, off-policy: $\pi_2(\cdot|s_0, C) = s_0$ (which is forced due to trivial constraint).
\section{Analysis of Example SA-MDP Game Where History Matters}\label{app:example-samdp}
\subsection{Setup}
The state space and initial distribution are given by
\(
    S=\{s_0,s_1,s_2,s_3, t\},\ b^0=\delta_{s_0}.
\)
The action space is \(
  A_1=\{A,B,C\}.
\)
\citet{horak2023solving} solve for infinite horizon settings. Therefore, we  choose $\gamma \sim 1$, and add absorbing states after $t=2$ that return reward 0 from then on to convert our finite horizon setting to an infinite horizon game. Similarly, $t$ is added to our state space. To simplify the notation of our SA-MDP, we drop this dependence on $t$, with the understanding that after $t=2$, rewards are 0 from then on and all transition probabilities from a state to itself are 1 regardless of the action taken.
\paragraph{Transitions}
Only transitions out of $s_0$ are nontrivial. Transitions are given by
\begin{align}
  p(s_2\mid s_0,A) &= \tfrac12, & p(s_3\mid s_0,A) &= \tfrac12, & p(s_1\mid s_0,A) &= 0, \label{eq:prior-A}\\
  p(s_1\mid s_0,B) &= \tfrac12, & p(s_2\mid s_0,B) &= \tfrac12, & p(s_3\mid s_0,B) &= 0, \label{eq:prior-B}\\
  p(s_0\mid s_0,C) &= 1. &&&& \label{eq:prior-C}
\end{align}
All other states are absorbing.

\paragraph{Reward}
The reward $R(s,a)$ is
\begin{equation*}
  \label{eq:reward-table}
  \begin{array}{c|ccc}
      & A & B & C \\
    \hline
    s_0 & 0 & 0 & -10 \\
    s_1 & 0 & -2 & 0 \\
    s_2 & -1 & 2 & -1 \\
    s_3 & -2 & -2 & 0
  \end{array}
  \end{equation*}

\paragraph{Adversary Constraint Sets}
We restrict the adversary to report only ``neighboring'' states:
\begin{equation}
  \label{eq:neighbor}
  \begin{aligned}
    B(s_0) &= \{s_0\},\\
    B(s_1) &= \{s_1,s_2\},\\
    B(s_2) &= \{s_2,s_3\},\\
    B(s_3) &= \{s_1,s_3\}.
  \end{aligned}
\end{equation}
where at $s_0$ the adversary has to report the (trivial) true initial state.

\subsection{Markovian vs history-dependent observation adversaries}

A \emph{Markovian} adversary is any kernel
$\pi_2(\tilde s\mid s)$ such that $\mathrm{supp}(\pi_2(\cdot\mid s))\subseteq B(s)$ for each $s$.
Because $B(s_0)=\{s_0\}$, we have $\pi_2(s_0\mid s_0)=1$.

For $s_1,s_2,s_3$, every such Markovian kernel can be parameterized by $(q_1,q_2,q_3)\in[0,1]^3$:
\begin{equation}
  \label{eq:nu-param-markov}
  \begin{aligned}
    \pi_2(s_1\mid s_1) &= q_1,\quad &\pi_2(s_2\mid s_1) &= 1 - q_1, \\
    \pi_2(s_2\mid s_2) &= q_2,\quad &\pi_2(s_3\mid s_2) &= 1 - q_2, \\
    \pi_2(s_1\mid s_3) &= 1 - q_3,\quad &\pi_2(s_3\mid s_3) &= q_3.
  \end{aligned}
\end{equation}

A \emph{history-dependent} adversary is allowed to choose its reporting kernel as a function of the full history (past states/actions/observations).
So the adversary is powerless at $s_0$ (it must report $s_0$), and only matters at the second stage.

\subsection{Markovian vs history-dependent observation adversaries}

A \emph{Markovian} adversary is given by a ``Markov misreporting kernel''
$\pi_2(\tilde s\mid s)$ such that $\mathrm{supp}(\pi_2(\cdot\mid s))\subseteq B(s)$ for each $s$.
Because $B(s_0)=\{s_0\}$, we have $\pi_2(s_0\mid s_0)=1$.

For $s_1,s_2,s_3$, every such Markovian kernel can be parameterized by $(q_1,q_2,q_3)\in[0,1]^3$:
\begin{equation}
  \label{eq:nu-param-hd}
  \begin{aligned}
    \pi_2(s_1\mid s_1) &= q_1,\quad &\pi_2(s_2\mid s_1) &= 1 - q_1, \\
    \pi_2(s_2\mid s_2) &= q_2,\quad &\pi_2(s_3\mid s_2) &= 1 - q_2, \\
    \pi_2(s_1\mid s_3) &= 1 - q_3,\quad &\pi_2(s_3\mid s_3) &= q_3.
  \end{aligned}
\end{equation}

A \emph{history-dependent} adversary is allowed to choose its reporting kernel as a function of the
full history (past states/actions/observations). 
\subsection{Markovian adversary: best responses and value}
The analysis of the Markovian adversary is a fair amount of exhaustion by cases due to the division of the simplex into 4 regions, each satisfying both, either, or none of them.

\paragraph{Setup}
Fix a Markovian misreporting kernel \eqref{eq:nu-param-hd}.  Let $v_A(q_2,q_3)$ denote the agent's
best-response value if it commits to playing action $A$ in state $s_0$, and similarly
$v_B(q_1,q_2)$ the value if it commits to playing action $B$ in $s_0$.
Against the kernel $(q_1,q_2,q_3)$, the agent's best value is
\[
  F(q_1,q_2,q_3) \coloneqq \max\{ v_A(q_2,q_3),\; v_B(q_1,q_2)\}.
\]
We compute $v_A$ and $v_B$, then minimize $F$ over $(q_1,q_2,q_3)\in[0,1]^3$.

\paragraph{Branch $A$: value $v_A(q_2,q_3)$}

Under action $A$ at $s_0$, the prior at the second stage is $p^A=(0,\tfrac12,\tfrac12)$ over
$(s_1,s_2,s_3)$.

From $s_2$ and $s_3$, the observation distributions are
\begin{itemize}
  \item $s_2$: report $s_2$ with prob.\ $q_2$, report $s_3$ with prob.\ $1-q_2$.
  \item $s_3$: report $s_1$ with prob.\ $1-q_3$, report $s_3$ with prob.\ $q_3$.
\end{itemize}

Total report probabilities under the $A$-branch:
\[
\begin{aligned}
  \Prob(\tilde s=s_1\mid A) &= 0.5(1-q_3), \\
  \Prob(\tilde s=s_2\mid A) &= 0.5 q_2, \\
  \Prob(\tilde s=s_3\mid A) &= 0.5(1-q_2) + 0.5 q_3.
\end{aligned}
\]

\paragraph{Report $s_1$.}
Only $s_3$ can produce report $s_1$, so $P(s_3\mid \tilde s=s_1)=1$. From
\eqref{eq:reward-table}, action $C$ is uniquely optimal at $s_3$ with value $0$.

\paragraph{Report $s_2$.}
Only $s_2$ can produce report $s_2$, so $P(s_2\mid \tilde s=s_2)=1$. At $s_2$, action $B$ is
optimal with value $2$.

\paragraph{Report $s_3$.}
Report $s_3$ can come from $s_2$ or $s_3$. Using Bayes' rule to update beliefs, given
\[
\begin{aligned}
  w_2 &\coloneqq \Prob(\tilde s=s_3, s_2\mid A) = 0.5(1-q_2), \\
  w_3 &\coloneqq \Prob(\tilde s=s_3, s_3\mid A) = 0.5 q_3
  \end{aligned}\]
we get
\[ 
\begin{aligned}
P(s_2\mid \tilde s=s_3,A) &= \frac{w_2}{w_2+w_3}
  = \frac{1-q_2}{1-q_2+q_3} =: p.
\end{aligned}
\]
Expected payoffs for the report $s_3$:
\[
\begin{aligned}
  \mathbb{E}[R\mid A,\tilde s=s_3,A] &= p(-1) + (1-p)(-2) = -2 + p,\\
  \mathbb{E}[R\mid B,\tilde s=s_3,A] &= p(2) + (1-p)(-2) = -2 + 4p,\\
  \mathbb{E}[R\mid C,\tilde s=s_3,A] &= p(-1) + (1-p)(0) = -p.
\end{aligned}
\]
Comparisons show that
\begin{itemize}
  \item $B$ beats $C$ iff $p\ge 0.4$.
  \item $B$ always beats $A$ (for $p\ge0$).
  \item $C$ always beats $A$ (for $p\le1$).
\end{itemize}
So, the best action for the report $s_3$ is $B$ if $p\ge0.4$ and $C$ if $p<0.4$.
In terms of $(q_2,q_3)$,
\[
  p\ge0.4
  \iff \frac{1-q_2}{1-q_2+q_3}\ge0.4
  \iff 3 - 3q_2 - 2q_3 \ge 0.
\]
Define the region
\[
  C_A \coloneqq \{(q_2,q_3): 3 - 3q_2 - 2q_3 \ge 0\}.
\]

\paragraph{Branch-$A$ value.}
\emph{Case A1: $(q_2,q_3)\in C_A$ (so report $s_3\mapsto B$).}

From $s_2$ we always play $B$ and get $2$. From $s_3$ we play $C$ with prob.\ $1-q_3$ (payoff $0$)
and $B$ with prob.\ $q_3$ (payoff $-2$), so $\mathbb{E}[R\mid s_3] = -2q_3$.
Thus,
\[
  v_A(q_2,q_3) = 0.5\cdot 2 + 0.5\cdot (-2q_3) = 1 - q_3.
\]

\emph{Case A2: $(q_2,q_3)\notin C_A$ (so report $s_3\mapsto C$).}

From $s_2$, report $s_2$ leads to $B$ (payoff $2$) and report $s_3$ leads to $C$ (payoff $-1$),
so
\[
  \mathbb{E}[R\mid s_2] = q_2\cdot 2 + (1-q_2)(-1) = 3q_2 - 1.
\]
From $s_3$ we always play $C$ and get $0$. Hence,
\[
  v_A(q_2,q_3) = 0.5(3q_2-1)+0.5(0) = 1.5q_2 - 0.5.
\]

In summary:
\begin{equation}
  \label{eq:vA}
  v_A(q_2,q_3)=
  \begin{cases}
    1 - q_3, & \text{if } 3 - 3q_2 - 2q_3 \ge 0,\\
    1.5q_2 - 0.5, & \text{if } 3 - 3q_2 - 2q_3 < 0.
  \end{cases}
\end{equation}

\paragraph{Branch $B$: value $v_B(q_1,q_2)$}

Under action $B$ at $s_0$, the prior at the second stage is $p^B=(\tfrac12,\tfrac12,0)$.

From $s_1$ and $s_2$:
\begin{itemize}
  \item $s_1$: report $s_1$ with prob.\ $q_1$, report $s_2$ with prob.\ $1-q_1$.
  \item $s_2$: report $s_2$ with prob.\ $q_2$, report $s_3$ with prob.\ $1-q_2$.
\end{itemize}

Total report probabilities under the $B$-branch:
\[
\begin{aligned}
  \Prob(\tilde s=s_1\mid B) &= 0.5q_1,\\
  \Prob(\tilde s=s_2\mid B) &= 0.5(1-q_1)+0.5q_2,\\
  \Prob(\tilde s=s_3\mid B) &= 0.5(1-q_2).
\end{aligned}
\]

\paragraph{Report $s_1$.}
Only $s_1$ can produce report $s_1$; best action is $A$ (value $0$).

\paragraph{Report $s_3$.}
Only $s_2$ can produce report $s_3$; best action is $B$ (value $2$).

\paragraph{Report $s_2$.}
Report $s_2$ can come from $s_1$ or $s_2$. Let
\[
\begin{aligned}
  w_1 &= \Prob(\tilde s=s_2,s_1\mid B) = 0.5(1-q_1),\\
  w_2 &= \Prob(\tilde s=s_2,s_2\mid B) = 0.5q_2.
\end{aligned}
\]
Then
\begin{align*}
  P(s_1\mid \tilde s=s_2,B)&=\frac{1-q_1}{1-q_1+q_2}\\
  P(s_2\mid \tilde s=s_2,B)&=\frac{q_2}{1-q_1+q_2}.
\end{align*}
Expected payoffs for the report $s_2$:
\[
\begin{aligned}
  \mathbb{E}[R\mid A,\tilde s=s_2,B] &= -\frac{q_2}{1-q_1+q_2},\\
  \mathbb{E}[R\mid C,\tilde s=s_2,B] &= -\frac{q_2}{1-q_1+q_2},\\
  \mathbb{E}[R\mid B,\tilde s=s_2,B] &= \frac{-2(1-q_1)+2q_2}{1-q_1+q_2}
  = \frac{-2+2q_1+2q_2}{1-q_1+q_2}.
\end{aligned}
\]
So $B$ beats $A$/$C$ iff
\[
  -2+2q_1+2q_2 \ge -q_2
  \iff 2q_1+3q_2 \ge 2.
\]
Define the region
\[
  C_B \coloneqq \{(q_1,q_2): 2q_1+3q_2 \ge 2\}.
\]

\paragraph{Branch-$B$ value.}
\emph{Case B1: $(q_1,q_2)\in C_B$ (so report $s_2\mapsto B$).}

Mapping: report $s_1\mapsto A$, report $s_2\mapsto B$, report $s_3\mapsto B$.

From $s_1$ we know that report $s_1$ gives $A$ (payoff $0$) and report $s_2$ gives $B$ (payoff $-2$), so
$\mathbb{E}[R\mid s_1]=-2(1-q_1)$.
From $s_2$: both reports lead to $B$, so $\mathbb{E}[R\mid s_2]=2$.
Thus,
\[
  v_B(q_1,q_2)=0.5[-2(1-q_1)]+0.5\cdot 2 = q_1.
\]

\emph{Case B2: $(q_1,q_2)\notin C_B$ (so report $s_2\mapsto A$).}

Mapping: report $s_1\mapsto A$, report $s_2\mapsto A$, report $s_3\mapsto B$.

From $s_1$: always $A$, so $\mathbb{E}[R\mid s_1]=0$.
From $s_2$: report $s_2$ gives $A$ ($-1$), report $s_3$ gives $B$ ($2$), so
$\mathbb{E}[R\mid s_2]=q_2(-1)+(1-q_2)2 = 2-3q_2$.
Hence,
\[
  v_B(q_1,q_2)=0.5(0)+0.5(2-3q_2)=1-1.5q_2.
\]

In summary:
\begin{equation}
  \label{eq:vB}
  v_B(q_1,q_2)=
  \begin{cases}
    q_1, & \text{if } 2q_1+3q_2 \ge 2,\\
    1-1.5q_2, & \text{if } 2q_1+3q_2 < 2.
  \end{cases}
\end{equation}

\subsection{Minimizing the Markovian worst-case value}

For each kernel $(q_1,q_2,q_3)$, the agent's best-response value is
\[
  F(q_1,q_2,q_3) = \max\{ v_A(q_2,q_3),\ v_B(q_1,q_2)\},
\]
with $v_A$ and $v_B$ given by \eqref{eq:vA}--\eqref{eq:vB}. The Markovian
adversary's minimax problem is
\[
  V_{\text{Markov}} \coloneqq \inf_{q_1,q_2,q_3\in[0,1]} F(q_1,q_2,q_3).
\]

The domain splits into four regions depending on whether the conditions
defining $v_A$ and $v_B$ hold:
\[
  \begin{aligned}
    C_A &: 3 - 3 q_2 - 2 q_3 \ge 0,\\
    C_B &: 2 q_1 + 3 q_2 \ge 2.
  \end{aligned}
\]

We consider all four combinations. On each region, $v_A$ and $v_B$ are
linear, and $F$ is the maximum of two linear functions.

\paragraph{Region 1: $C_A$ and $C_B$ both hold.}

Here $v_A = 1 - q_3$ and $v_B = q_1$, so
\[
  F = \max\{1 - q_3,\ q_1\}.
\]
To minimize $F$ we equalize the two arguments
\[
q_1 = 1 - q_3. \label{eq:mint1t3}
\]
We substitute these into $C_B$
\[
  2(1-q_3) + 3 q_2 \ge 2
  \iff 3 q_2 \ge 2 q_3
  \iff q_2 \ge \tfrac{2}{3} q_3.
\]
From $C_A$
\[
  3 - 3 q_2 - 2 q_3 \ge 0
  \iff 3 q_2 \le 3 - 2 q_3
  \iff q_2 \le 1 - \tfrac{2}{3} q_3.
\]
Together, both constraints imply
\[
  \tfrac{2}{3} q_3 \le 1 - \tfrac{2}{3} q_3
  \iff q_3 \le \tfrac{3}{4}.
\]
On the equality line \cref{eq:mint1t3}, $F = q_1 = 1 - q_3$, which is minimized
by taking $q_3$ as large as possible, i.e.\ $q_3 = 3/4$, giving
\[
  F_{\min,\text{Region 1}} = 1 - \tfrac{3}{4} = \tfrac{1}{4}.
\]

\paragraph{Region 2: $C_A$ holds, $C_B$ fails.}

Here $v_A = 1 - q_3$ and $v_B = 1 - 1.5 q_2$, so
\[
  F = \max\{1 - q_3,\ 1 - 1.5 q_2\}.
\]
Equalizing gives
\[
  1 - q_3 = 1 - 1.5 q_2 \iff q_3 = 1.5 q_2.
\]

Constraint $C_A$ with $q_3 = 1.5 q_2$:
\[
  3 - 3 q_2 - 2(1.5 q_2) \ge 0
  \iff 3 - 6 q_2 \ge 0
  \iff q_2 \le 0.5.
\]
The condition $\lnot C_B: 2 q_1 + 3 q_2 < 2$ only restricts $q_1$ and can be
satisfied for some $q_1$ at any given $q_2\le 0.5$, so it does not constrain $q_2$ further.

On the equality line $q_3 = 1.5 q_2$, the value is
\[
  F = 1 - 1.5 q_2,
\]
minimized by taking $q_2$ as large as possible, i.e.\ $q_2 = 0.5$, giving
\[
  F_{\min,\text{Region 2}} = 1 - 1.5\cdot 0.5 = 0.25.
\]

\paragraph{Region 3: $C_A$ fails, $C_B$ holds.}

Here $v_A = 1.5 q_2 - 0.5$ and $v_B = q_1$, so
\[
  F = \max\{1.5 q_2 - 0.5,\ q_1\}.
\]
Equalizing gives
\[
  q_1 = 1.5 q_2 - 0.5.
\]
Constraint $C_B$ becomes
\begin{align*}
  2(1.5 q_2 - 0.5) + 3 q_2 \ge 2
  &\iff 3 q_2 - 1 + 3 q_2 \ge 2\\
  &\iff 6 q_2 \ge 3\\
  &\iff q_2 \ge 0.5.
\end{align*}
The condition $\lnot C_A: 3 - 3 q_2 - 2 q_3 < 0$ is
\[
  3 q_2 + 2 q_3 > 3.
\]
For any $q_2\ge 0.5$ this can be satisfied (e.g., \ by choosing $q_3$ close to
$1$), so it does not restrict $q_2$ beyond $q_2\ge 0.5$.

On the equality line, $F = q_1 = 1.5 q_2 - 0.5$, minimized at $q_2=0.5$:
\[
  F_{\min,\text{Region 3}} = 1.5\cdot 0.5 - 0.5 = 0.25.
\]

\paragraph{Region 4: $C_A$ and $C_B$ both fail.}

Here $v_A = 1.5 q_2 - 0.5$ and $v_B = 1 - 1.5 q_2$, so
\[
  F = \max\{1.5 q_2 - 0.5,\ 1 - 1.5 q_2\},
\]
which depends only on $q_2$.

Equalizing:
\[
  1.5 q_2 - 0.5 = 1 - 1.5 q_2
  \iff 3 q_2 = 1.5
  \iff q_2 = 0.5.
\]
At $q_2=0.5$ we get
\[
  F = 1.5\cdot 0.5 - 0.5 = 0.25.
\]
Here, the failure of $C_A$ and $C_B$ requires
\begin{align*}
  3 - 3 q_2 - 2 q_3 < 0 &\iff q_3 > 0.75,\\
  2 q_1 + 3 q_2 < 2 &\iff q_1 < 0.25,
\end{align*}
so, for example, $(q_1,q_2,q_3)=(0,0.5,0.8)$ lies in Region 4 and has $F=0.25$.
\paragraph{Conclusion}

In all four regions, the infimum of $F(q_1,q_2,q_3)$ is $1/4$. We also have
explicit triples (for example $(0,0.5,0.8)$) that achieve $F=1/4$. Thus,
\begin{align*}
  V_{\text{Markov}} &= \inf_{q_1,q_2,q_3\in[0,1]} F(q_1,q_2,q_3) \\
  &= \min_{q_1, q_2, q_3 \in [0,1]} F(q_1,q_2,q_3)\\
  &= \frac{1}{4}.
\end{align*}

Moreover, for any such minimizing kernel $\pi_2^*$, the best-response value is
exactly $1/4$, so \emph{no (pure) agent strategy can achieve expected reward
strictly larger than $1/4$} against this optimal Markovian adversary.

\subsection{History-dependent adversary}

Now, allow a history-dependent adversary that can use one kernel on the $A$-subtree and a different
kernel on the $B$-subtree. Concretely, at the second stage it may choose:
\[
  \pi_2^A(\cdot\mid s)\ \text{after first-stage action }A\] and \[
  \pi_2^B(\cdot\mid s)\ \text{after first-stage action }B,
\]
each respecting the same neighbor constraint \eqref{eq:neighbor}.

Then, two crucial facts in this new history dependent strategy are:
\begin{enumerate}
  \item The agent has a baseline strategy with value $0$ against \emph{any} adversary: play any mix of
  $A$ and $B$ at $s_0$, and at the second stage always play action $B$ regardless of the reported
  state. Under the $A$-branch, states $\{s_2,s_3\}$ yield payoffs $+2$ and $-2$ with equal
  probability; under the $B$-branch, states $\{s_1,s_2\}$ yield payoffs $-2$ and $+2$ with equal
  probability. In both cases, the expectation is $0$, and misreporting does not matter if reports are
  ignored. Hence, $V_{\text{Hist}}\ge 0$.
  \item On each branch separately, the adversary can drive the agent's best-response value down to $0$.
\end{enumerate}
The first fact is self-explanatory; however, the second fact requires some additional work. For the second fact, we show that there are branch-specific values that satisfy
\[
  \inf_{q_2,q_3} v_A(q_2,q_3)=0,\qquad \inf_{q_1,q_2} v_B(q_1,q_2)=0.
\]
For example,
\begin{itemize}
  \item On the $A$-branch, take $(q_2,q_3)=(0,1)$, which lies in $C_A$ and gives $v_A=1-q_3=0$.
  \item On the $B$-branch, take $(q_1,q_2)=(0,2/3)$, which lies in $\lnot C_B$ and gives
  $v_B=1-1.5\cdot(2/3)=0$.
\end{itemize}

A history-dependent adversary can therefore use $\pi_2^A$ with $(q_2,q_3)=(0,1)$ on the $A$-subtree
and $\pi_2^B$ with $(q_1,q_2)=(0,2/3)$ on the $B$-subtree. Then any policy that commits to $A$ can
achieve at most $0$ on that branch, any policy that commits to $B$ can achieve at most $0$ on that
branch, and mixing yields a convex combination, still at most $0$.
Therefore, $V_{\text{Hist}} \leq  0$. But by the fact that there is the baseline where the agent just ignores misreports and still obtains 0, we know that $V_{\text{Hist}} \geq 0$

Therefore, combining the baseline strategy with the examples we found, we obtain the exact value
\[
  V_{\text{Hist}} = 0.
\]
\subsection{Conclusion of $V_{\textrm{Markov}}$ and $V_{\textrm{Hist}}$ Comparison}
Thus, the history-dependent observation adversary is strictly more damaging
\[
  V_{\text{Hist}}=0 \;<\; \frac14 = V_{\text{Markov}}.
\]
\section{Algorithm Details}
\label{app:algorithm}
A high-level overview of our algorithm is provided below in \cref{alg:samdp_pipeline} and \cref{alg:constrained_hsvi}.

In \cref{alg:samdp_pipeline,alg:constrained_hsvi}, we write $V^{\Gamma}_{LB}$ and $V^{\Upsilon}_{UB}$ for the lower- and upper-bound value functions denoted above by $V_{LB}$ and $V_{UB}$. 
The superscripts emphasize that the lower bound is represented by a set $\Gamma$ of $\alpha$-vectors, whereas the upper bound is constructed from a set $\Upsilon$ of belief--value points. 
Thus, $V^{\Gamma}_{LB}$ and $V^{\Upsilon}_{UB}$ refer to the same bound functions as $V_{LB}$ and $V_{UB}$, with their finite representations made explicit.

\begin{algorithm}[bth]
\caption{Compute $\epsilon$-equilibrium for history-dependent SA-MDP}
\label{alg:samdp_pipeline}
\begin{algorithmic}[1]
\REQUIRE SA-MDP $M=(S,A_1,R,p,\gamma,B)$, initial belief $b^0\in\Delta(S)$, tolerance $\epsilon>0$, (optional) horizon $H$

\STATE \textbf{/* Make the model stationary if finite-horizon */} 
\IF{$H$ is specified}
    \STATE $S \leftarrow S\cup \{t\}$
    \FOR{$s \in S, a \in A_1$}
    \IF{$t \geq H$}
    \STATE $R(s,a)\leftarrow  0$ $\quad \forall s \in S, a \in A_1$\\
    \STATE $p(s|s,a)\leftarrow1$\\
    \STATE $p(s' |s, a)\leftarrow 0, \forall s' \neq s$
    \ENDIF
    \ENDFOR
    \COMMENT{add time $t$ to state; add absorbing terminal dynamics after $t=H$}
\ENDIF

\STATE \textbf{/* Reduce SA-MDP to a constrained zero-sum OS-POSG */} 
\STATE $\widehat G \leftarrow \textsc{ParallelizeToOSPOSG}(M)$
\COMMENT{dummy states/actions; $\widehat\gamma=\sqrt{\gamma}$; attacker action encodes reported observation}

\STATE $\widehat G \leftarrow \textsc{ImposeStateDependentConstraints}(\widehat G,B)$
\COMMENT{set $L_1(s)\gets\{\bot_1\}$ and $L_2(s)\gets B(s)$ on true states;set $L_1(\bar s)\gets A_1$ and $L_2(\bar s)\gets\{\bot_2\}$ on dummy states}

\STATE \textbf{/* Solve $\widehat G$ using HSVI (Hor\'ak et al.) */} 
\STATE $(V^{\Gamma}_{LB},V^{\Upsilon}_{UB},\Gamma,\Upsilon) \leftarrow \textsc{HSVI}(\widehat G,b^0,\epsilon)$

\STATE \textbf{/* Strategy extraction (online policies) */} 
\STATE $\sigma_1 \leftarrow \textsc{ContinualResolving}( \widehat G,\Gamma,b^0)$
\STATE $\sigma_2 \leftarrow \textsc{UpperBoundStagePolicy}( \widehat G, V^{\Upsilon}_{UB}, b^0)$

\STATE \textbf{/* Map back to SA-MDP by collapsing sub-steps */} 
\STATE $(\pi_1,\pi_2)\leftarrow \textsc{CollapseDummySteps}(\sigma_1,\sigma_2)$
\COMMENT{drop dummy turns; attacker action $\equiv$ perturbed observation}
 
\STATE \textbf{return} $(\pi_1,\pi_2,\,V^{\Gamma}_{LB}(b^0),\,V^{\Upsilon}_{UB}(b^0))$
\end{algorithmic}
\end{algorithm}
\begin{algorithm}[bth]
\caption{\textsc{HSVI} for constrained zero-sum OS-POSGs, blue indicates change relative to \citet{horak2023solving}}
\label{alg:constrained_hsvi}
\begin{algorithmic}[1]
\REQUIRE Constrained OS-POSG $G=(S,A_1,A_2,O,T,R,\gamma,L_1,L_2)$, initial belief $b^0$, tolerance $\epsilon>0$, neighborhood parameter $D$

\STATE Initialize $V^{\Gamma}_{LB}$ and $V^{\Upsilon}_{UB}$ (lower/upper bounds)
\WHILE{$\textsc{Excess}(b^0,0) > 0$}
    \STATE \textsc{Explore}$(b^0,0)$
\ENDWHILE
\STATE \textbf{return} $(V^{\Gamma}_{LB},V^{\Upsilon}_{UB},\Gamma,\Upsilon)$

\vspace{1mm}
\STATE \textbf{procedure} \textsc{Explore}$(b^t,t)$
\STATE $(\pi^{LB}_1,\pi^{LB}_2)\leftarrow \textsc{StageGameEquilibrium}(b^t,V^{\Gamma}_{LB})$
\COMMENT{\textcolor{blue}{$\pi^{LB}_1\in\Delta(L_1(h_1))$;
$\pi^{LB}_2(\cdot\mid s)\in\Delta(L_2(s))$ for $s\in S(h_1)$}}
\STATE $(\pi^{UB}_1,\pi^{UB}_2)\leftarrow \textsc{StageGameEquilibrium}(b^t,V^{\Upsilon}_{UB})$
\STATE \textsc{PointBasedUpdate}$(b^t, V^{\Gamma}_{LB}, V^{\Upsilon}_{UB})$
\STATE $(a_1^\star,o^\star)\leftarrow \textsc{ForwardExploreHeuristic}(b^t,t,\pi^{UB}_1,\pi^{LB}_2)$
\IF{$\Pr_{b^t,\pi^{UB}_1,\pi^{LB}_2}[a_1^\star,o^\star]\cdot \textsc{Excess}(\tau(b^t,a_1^\star,\pi^{LB}_2,o^\star),t{+}1) > 0$}
    \STATE \textsc{Explore}$(\tau(b^t,a_1^\star,\pi^{LB}_2,o^\star),t{+}1)$
\ENDIF
\STATE \textsc{PointBasedUpdate}$(b^t, V^{\Gamma}_{LB}, V^{\Upsilon}_{UB})$
\STATE \textbf{end procedure}
\end{algorithmic}

\end{algorithm}
\FloatBarrier
We use a linear programming routine to solve for optimal strategies and values at each sub-game node. The code is implemented in C++ with a Python interface and uses linear programming to solve the stage-game for HSVI and the Markovian equilibrium strategies. We use CPLEX with an educational license to solve linear programming routines. \citep{cplex-manual}

Several implementation-specific modifications are made to our HSVI algorithm to improve computational efficiency. We cache commonly explored state-actions pairs. We also first group histories that lead to the same sets of observations at time $t$ and then subsequently split information sets that overlap only due to the presence of a shared state in the feasibility set of the adversary into separate information sets by duplicating the state in our game tree. This enables solving many smaller convex problems versus solving fewer larger convex problems per stage and is equivalent to retaining separate extensive form (history-dependent) information sets associated with the game. However, computationally, this reduces the number of information sets that we need to keep track of and means that we only ``recover the information sets as needed'' at each stage.

Our performance is illustrated by  \cref{fig:scaling_performance} and \cref{tab:depth_scaling}.  In \cref{fig:scaling_performance}, we observe roughly linear scaling in number of states and log-linear scaling in the depth of the game. Min-max range are provided with shaded region, while bars are provided 1 std out. Normality of runs cannot necessarily be assumed, especially with only 5 seeds. Runs were declared to have converged when the excess-gap fell below an epsilon of 0.01. SA-MDP was assumed to have $|A_1|=4$, which was kept constant across tree-depth. An actual value of $\gamma=.99$ was used to approximate the $\gamma\sim 1$ case. The adversary's proximity set allows the adversary to perturb any current state to any adjacent state, with first and last states being adjacent. (So, at state 1, the adversary could perturb observations to \{4,1,2\}, etc.)
\begin{figure}[!htbp]
\centering
\caption{HSVI solve time versus tree-depth. }
\label{fig:scaling_performance}
\includegraphics[width=1\linewidth]
{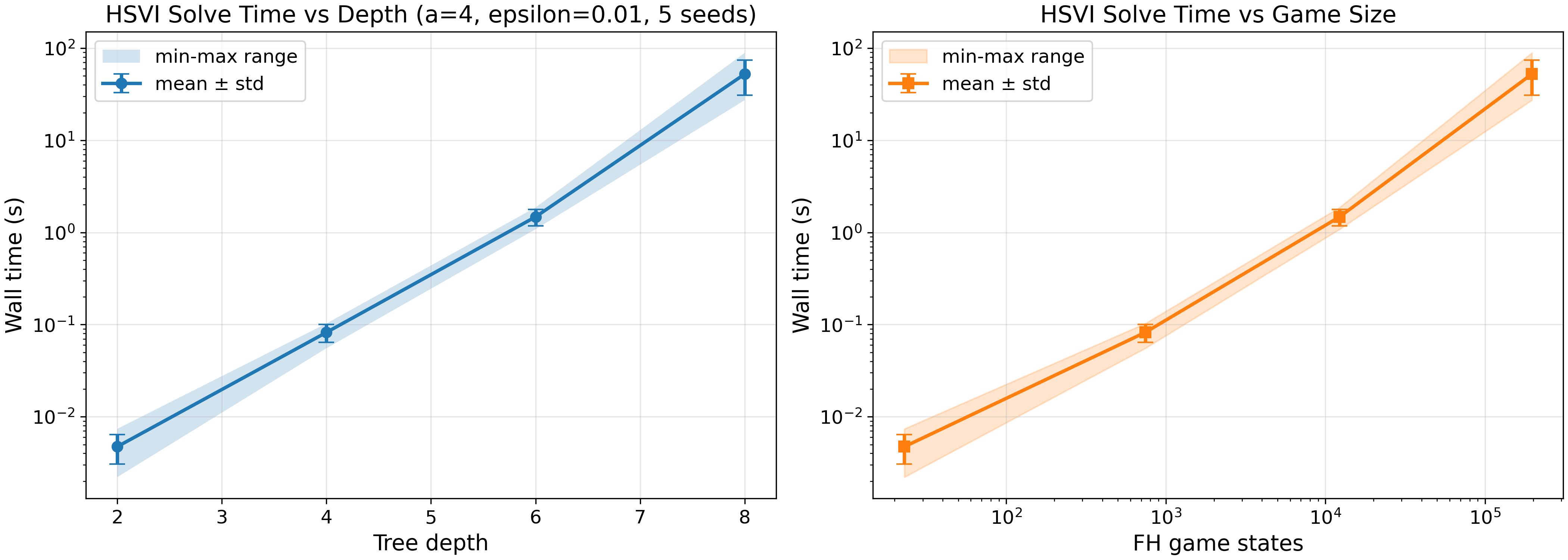}
\end{figure}

\FloatBarrier

\end{document}